\newtheorem{prop}{Proposition}
\newtheorem{dfn}{Definition}
\newcommand{\eps}{\epsilon}
\newcommand{\pa}{\partial}
\renewcommand{\eps}{\varepsilon}
\renewcommand{\epsilon}{\varepsilon}
\renewcommand{\Sigma}{\varSigma}
\title{Lagrangian Dynamical Monte Carlo} 
\author{Shiwei Lan\footnote{Department of Statistics, University of California, Irvine, USA.}, \quad Vassilios Stathopoulos\footnote{Department of Statistical Science, University College London, UK.}, \quad Babak Shahbaba$^{*}$, \quad Mark Girolami$^{\dagger}$ } 
\begin{document}
\maketitle

\bigskip
\begin{abstract}
Hamiltonian Monte Carlo (HMC) improves the computational efficiency of the Metropolis algorithm by reducing its random walk behavior. Riemannian Manifold HMC (RMHMC) further improves HMC's performance by exploiting the geometric properties of the parameter space. However, the geometric integrator used for RMHMC involves implicit equations that require costly numerical analysis (e.g., fixed-point iteration). In some cases, the computational overhead for solving implicit equations undermines RMHMC's benefits. To avoid this problem, we propose an explicit geometric integrator that replaces the momentum variable in RMHMC by velocity. We show that the resulting transformation is equivalent to transforming Riemannian Hamilton dynamics to Lagrangian dynamics. Experimental results show that our method improves RMHMC's overall computational efficiency. All computer programs and data sets are available online (\url{http://www.ics.uci.edu/~babaks/Site/Codes.html}) in order to allow replications of the results reported in this paper. 
\end{abstract}

\section{Introduction}

Hamiltonian Monte Carlo (HMC) \citep{duane87} reduces the random walk behavior of Metropolis by proposing samples that are distant from the current state, but nevertheless have a high probability of acceptance. These distant proposals are found by numerically simulating Hamiltonian dynamics for some specified amount of fictitious time \citep{neal10}. Hamiltonian dynamics can be represented by a function, known as the \emph{Hamiltonian} function, of model parameters $\boldsymbol\theta$ and fictitious momentum parameters ${\bf{p}}\sim N({\bf 0, M})$ (with the same dimension as $\boldsymbol\theta$) as follows:
\begin{eqnarray}\label{hamiltonian}
H({\boldsymbol\theta}, {\bf p}) & = & -\log p({\boldsymbol\theta}) + \frac12 {\bf p}^{\textsf T}{\bf M}^{-1}{\bf p} 
\end{eqnarray}
where $\bf M$ is a symmetric, positive-definite \emph{mass matrix}.

Hamilton's equations, which involve differential equations of $H$, determine how ${\boldsymbol\theta}$ and ${\bf p}$ change over time. In practice, however, solving these equations exactly is too hard, so we need to approximate them by discretizing time, using some small step size $\epsilon$. For this purpose, the \emph{leapfrog} method is commonly used.

As the dimension grows, the system becomes increasingly restricted by its smallest eigen-direction, requiring smaller step sizes to maintain the stability of numerical discretization. \cite{girolami11} proposed a new method, called Riemannian Manifold HMC (RMHMC), that exploits the geometric properties of the parameter space to improve the efficiency of standard HMC. Simulating from the resulting dynamic, however, is computationally intensive since it involves solving two implicit equations, which require additional iterative numerical analysis (e.g., fixed-point iteration). 

To increase RMHMC's speed, we propose a new  integrator that is completely explicit: we propose to replace momentum with velocity in Riemannian Manifold Hamilton dynamics. As we will see, this is equivalent to using Lagrangian dynamics as opposed to Hamiltonian dynamics. By doing so, we eliminate one of the implicit steps in RMHMC. Next, we construct a time symmetric integrator to remove the remaining implicit step in RHHMC. This leads to a sampling scheme, called e-RMHMC, that involves explicit equations only. 

In what follows, we start with a brief review of RMHMC and its geometric integrator. Section 3 introduces our proposed semi-explicit integrator based on defining Hamiltonian dynamics in terms of velocity as opposed to momentum. Next, in Section 4, we eliminate the remaining implicit equation and propose a fully explicit integrator. In Section 5, we use simulated and real data to evaluate our methods' performance. Finally, in Section 6, we discuss some possible future research directions.

\section{Riemannian Manifold Hamiltonian Monte Carlo}

\begin{figure}[t]
\begin{center}
 \centerline{
\includegraphics[width=6.5in]{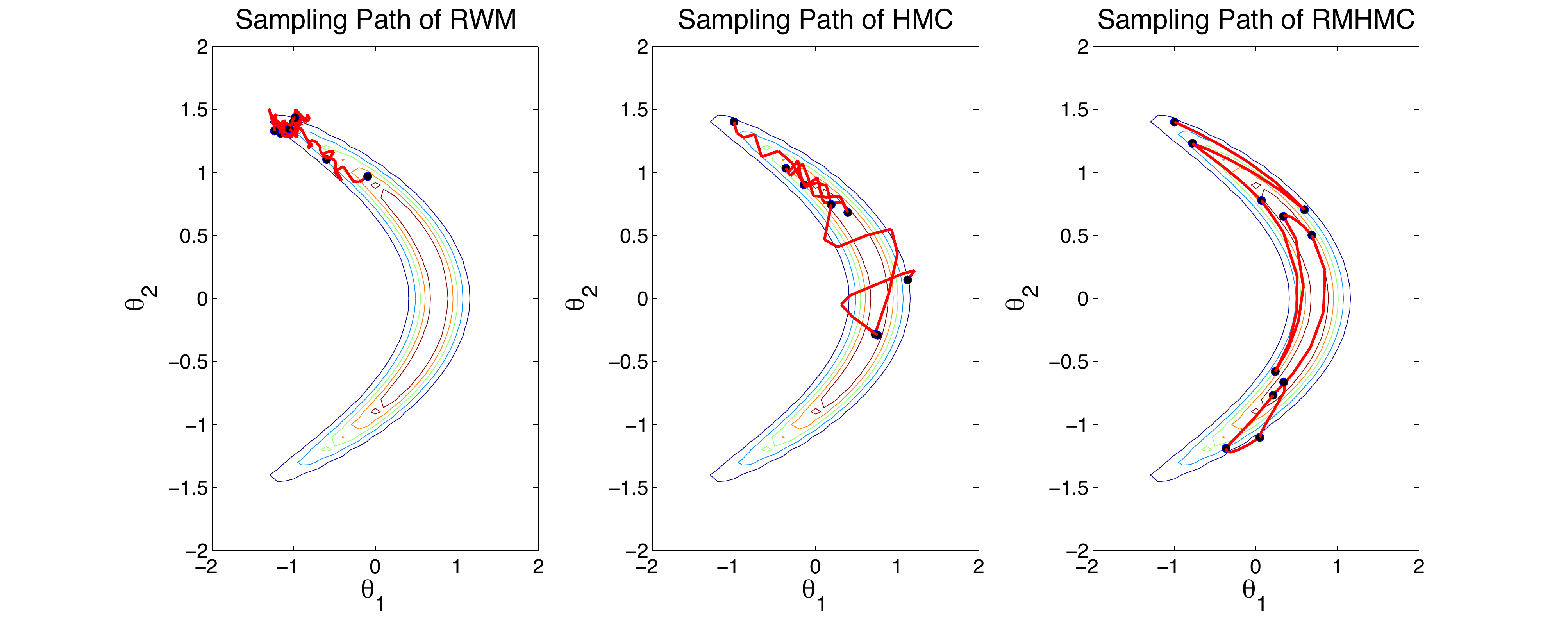}
}
\caption{The first 10 iterations in sampling from a banana shaped distribution with random walk Metropolis (RWM), Hamiltonian Monte Carlo (HMC), and Riemannian Manifold HMC (RMHMC). For all three methods, the trajectory length (i.e., step size times number of integration steps) is set to 1. Solid red lines are the sampling path, and black circles are the accepted proposals.}
\label{fig:illust}
\end{center}
 \end{figure}

As discussed above, although HMC explores the parameter space more efficiently than random walk Metropolis does, it does not fully exploits the geometric properties of parameter space defined by the density $p({\boldsymbol\theta})$. Indeed, \cite{girolami11} argue that dynamics over Euclidean space may not be appropriate to guide the exploration of parameter space. To address this issue, they propose a new method, called Riemannian Manifold HMC (RMHMC), that exploits the Riemannian geometry of the parameter space \citep{amari00} to improve standard HMC's efficiency by automatically adapting to the local structure. They do this by using a position-specific mass matrix $\bf{M} = {\bf G}({\boldsymbol\theta})$. More specifically, they set ${\bf G}({\boldsymbol\theta})$ to the Fisher information matrix. As a result, ${\bf p} = {\bf G}({\boldsymbol\theta})\dot {\boldsymbol\theta} \sim \mathcal N(0, {\bf G}({\boldsymbol\theta}))$, and Hamiltonian is defined as follows:
\begin{equation}\label{rmhamiltonp}
H({\boldsymbol\theta}, {\bf p})  = -\log p({\boldsymbol\theta}) +\frac12 \log\det {\bf G}({\boldsymbol\theta}) + 
\frac12 {\bf p}^{\textsf T}{\bf G}({\boldsymbol\theta})^{-1}{\bf p} = \phi({\boldsymbol\theta}) + \frac12 {\bf p}^{\textsf T} {\bf G}({\boldsymbol\theta})^{-1}{\bf p}
\end{equation}
where $\phi({\boldsymbol\theta}):= -\log p({\boldsymbol\theta}) +\frac12 \log\det {\bf G}({\boldsymbol\theta})$. Based on this dynamic, \cite{girolami11} propose the following HMC on Riemmanian manifold:
\begin{eqnarray}\begin{array}{lcrcr }
\displaystyle 
\dot {\boldsymbol\theta} & = & \nabla_{\bf p} H({\boldsymbol\theta}, {\bf p}) & = & {\bf G}({\boldsymbol\theta})^{-1}{\bf p} \\ [12pt]
\displaystyle 
\dot {\bf p} & = & -\nabla_{\boldsymbol\theta} H({\boldsymbol\theta}, {\bf p})& =  & -\nabla_{\boldsymbol\theta} \phi({\boldsymbol\theta}) + 
\frac12 {\boldsymbol\nu}({\boldsymbol\theta},{\bf p}) 
\end{array}\label{rmhd}\end{eqnarray}
Using the shorthand notation $\pa_i =  {\pa}/{\pa{\boldsymbol\theta}_i}$ for partial derivative, the $i$th element of the vector ${\boldsymbol\nu}({\boldsymbol\theta},{\bf p})$ is 
\begin{eqnarray*}
({\boldsymbol\nu}({\boldsymbol\theta},{\bf p}))_i = -{\bf p}^{\textsf T} \pa_i ( {\bf G}({\boldsymbol\theta})^{-1}){\bf p} = ({{\bf G}({\boldsymbol\theta})^{-1}\bf p})^{\textsf T} \pa_i {\bf G}({\boldsymbol\theta}) {{\bf G}({\boldsymbol\theta})^{-1}\bf p}
\end{eqnarray*}

The above dynamic is non-separable (it contains products of $\boldsymbol{\theta}$ and $p$), and the resulting map $({\boldsymbol\theta},{\bf p})\to ({\boldsymbol\theta}^*,{\bf p}^*)$ based on the standard leapfrog method is neither time-reversible nor symplectic. Therefore, the standard leapfrog algorithm cannot be used for the above dynamic \citep{girolami11}. Instead, we can use the St\"omer-Verlet \citep{verlet67} method as follows:
\begin{eqnarray}\label{gleapfrog}
{\bf p}^{(n+1/2)} 
 & = & {\bf p}^{(n)} - \frac{\eps}{2} \left[\nabla_{\boldsymbol\theta}\phi({\boldsymbol\theta}^{(n)})-\frac12
{\boldsymbol\nu}({\boldsymbol\theta}^{(n)},{\bf p}^{(n+1/2)})\right] \label{gleapfrog:imp1}\\
{\boldsymbol\theta}^{(n+1)} 
 & = & {\boldsymbol\theta}^{(n)} + \frac{\eps}{2} \left[{\bf G}^{-1}({\boldsymbol\theta}^{(n)}) + {\bf G}^{-1}({\boldsymbol\theta}^{(n+1)})\right]{\bf p}^{(n+1/2)}\label{gleapfrog:imp2}\\
{\bf p}^{(n+1)} 
 & = & {\bf p}^{(n+1/2)} - \frac{\eps}{2}\left[\nabla_{\boldsymbol\theta}\phi({\boldsymbol\theta}^{(n+1)})-\frac12
{\boldsymbol\nu}({\boldsymbol\theta}^{(n+1)},{\bf p}^{(n+1/2)})\right]
\end{eqnarray}
This is also known as generalized leapfrog \citep{leimkuhler04}. The above series of transformations are (i) deterministic (ii) reversible and (iii) volume-preserving. Therefore, the effective proposal distribution is a delta function $\delta(({\boldsymbol\theta}^{(1)}, {\bf p}^{(1)}), ({\boldsymbol\theta}^{(L)}, {\bf p}^{(L)})$ and the acceptance probability is as follows:
\begin{equation}
\frac{\exp(-H({\boldsymbol\theta}^{(L)}, {\bf p}^{(L)}))}{\exp(-H({\boldsymbol\theta}^{(1)}, {\bf p}^{(1)}))} 
\times 
\frac
{\delta(({\boldsymbol\theta}^{(L)}, {\bf p}^{(L)}), ({\boldsymbol\theta}^{(1)}, {\bf p}^{(1)}))}
{\delta(({\boldsymbol\theta}^{(1)}, {\bf p}^{(1)}), ({\boldsymbol\theta}^{(L)}, {\bf p}^{(L)}))} = 
\exp(H({\boldsymbol\theta}^{(1)}, {\bf p}^{(1)})-H({\boldsymbol\theta}^{(L)}, {\bf p}^{(L)}))
\end{equation}
Here, $({\boldsymbol\theta}^{(1)}, {\bf p}^{(1)})$ is the current state, and $({\boldsymbol\theta}^{(L)}, {\bf p}^{(L)})$ is the proposal after $L$ leapfrog steps. 

As an illustrative example, Figure \ref{fig:illust} shows the sampling paths of random walk Metropolis (RWM), HMC, and RMHMC for an artificially created banana-shaped distribution \citep[See ][discussion by Luke Bornn and Julien Cornebise]{girolami11}. For this example, we fixed the trajectory and chose the step sizes such that the acceptance probability for all three methods remains around 0.7. RWMH moves slowly and spends most of iterations at the distribution's low-density tail, and HMC explores the parameter space in a tortuous way, while RMHMC moves directly to the high-density region and explores the distribution more efficiently.

One major drawback of this geometric integrator, which is both time-reversible and volume-preserving, is that it involves two implicit functions: Equations \eqref{gleapfrog:imp1} and \eqref{gleapfrog:imp2}. These functions require extra numerical analysis (e.g. fixed-point iteration), which results in higher computational cost and simulation error. To address this problem, we propose an alternative approach that uses velocity instead of momentum.

\section{Moving from Momentum to Velocity}\label{useV}
In the Hamiltonian dynamic \eqref{rmhd}, the product of ${\bf G}({\boldsymbol\theta})^{-1}$ and ${\bf p}$ is in fact velocity, ${\bf v}={\bf G}({\boldsymbol\theta})^{-1}{\bf p}$. This motivates us to define the dynamic in terms of ${\bf v}$ instead of ${\bf p}$. The transformation ${\bf p}\mapsto {\bf v}$ changes the Hamiltonian dynamics \eqref{rmhd} to the following form (derivation in Appendix \ref{derRMLD}):
\begin{eqnarray}\begin{array}{lcl }
\displaystyle 
\dot {\boldsymbol\theta} & = & {\bf v} \\[12pt]
\displaystyle 
\dot {\bf v} & = & - {\boldsymbol\eta}({\boldsymbol\theta},{\bf v}) - {\bf G}({\boldsymbol\theta})^{-1} \nabla_{\boldsymbol\theta} \phi({\boldsymbol\theta})
\end{array}\label{rmld}\end{eqnarray}
where ${\boldsymbol\eta}({\boldsymbol\theta},{\bf v})$ is a vector whose $k$th element is $\sum\nolimits_{i,j}\Gamma^k_{ij}({\boldsymbol\theta}){\bf v}^i {\bf v}^j$. Here, $\Gamma^k_{ij}({\boldsymbol\theta}) := \frac12 \sum_l {\bf g}^{kl} (\pa_i {\bf g}_{lj} + \pa_j {\bf g}_{il} - \pa_l {\bf g}_{ij}) $ is Christoffel symbol whose $(i,j)$th element is ${\bf G}({\boldsymbol\theta})= ({\bf g}_{ij})$. Further, ${\bf G}({\boldsymbol\theta})^{-1}=({\bf g}^{ij})$.

This transformation moves the Hamiltonian dynamic's complexity from its first equation for ${\boldsymbol\theta}$ to its second equation. In this way, we resolve one implicit function in the generalized leapfrog method and develop a semi-explicit integrator as follows:
\begin{eqnarray}\label{GLv0}
{\bf v}^{(n+1/2)}  & = & {\bf v}^{(n)} - \frac{\eps}{2}[({\bf v}^{(n+1/2)})^{\textsf T} \Gamma({\boldsymbol\theta}^{(n)}) {\bf v}^{(n+1/2)} + {\bf G}({\boldsymbol\theta}^{(n)})^{-1} \nabla_{\boldsymbol\theta}\phi({\boldsymbol\theta}^{(n)})] \label{GLv:1}\\
{\boldsymbol\theta}^{(n+1)} & = & {\boldsymbol\theta}^{(n)} + \eps {\bf v}^{(n+1/2)}  \label{GLv:2}\\
{\bf v}^{(n+1)} & = & {\bf v}^{(n+1/2)} - \frac{\eps}{2}[({\bf v}^{(n+1/2)})^{\textsf T} \Gamma({\boldsymbol\theta}^{(n+1)}) {\bf v}^{(n+1/2)} + {\bf G}({\boldsymbol\theta}^{(n+1)})^{-1} \nabla_{\boldsymbol\theta}\phi({\boldsymbol\theta}^{(n+1)})] \label{GLv:3}
\end{eqnarray}
Note that updating ${\bf v}$ remains implicit (more details are available in Appendix \ref{semi-RMLMC}).

In general, the new dynamic \eqref{rmld} cannot be recognized as a Hamiltonian dynamic of $({\boldsymbol\theta}, {\bf v})$. Nevertheless, it remains a valid proposal-generating mechanism, which preserves the original Hamiltonian $H({\boldsymbol\theta}, {\bf p}={\bf G}({\boldsymbol\theta}){\bf v})$ (proof in Appendix \ref{derRMLD}); thus, the acceptance probability is only determined by a discretization error from the numerical integration, as before.

Because ${\bf p}\sim \mathcal N({\bf 0},{\bf G}({\boldsymbol\theta}))$, the distribution of ${\bf v}={\bf G}({\boldsymbol\theta})^{-1}{\bf p}$ is $ \mathcal N({\bf 0},{\bf G}({\boldsymbol\theta})^{-1})$. Therefore, we have
\begin{eqnarray*}
p({\bf v}) = \frac{1}{(\sqrt{2\pi})^{D}\sqrt{\det({\bf G}({\boldsymbol\theta})^{-1})}} \exp\left\{-\frac12 {\bf v}^{\textsf T} {\bf G}({\boldsymbol\theta}) {\bf v} \right\} \propto (\det {\bf G}({\boldsymbol\theta}))^{1/2} \exp\left\{-\frac12 {\bf v}^{\textsf T} {\bf G}({\boldsymbol\theta}) {\bf v} \right\}
\end{eqnarray*}
We define the \emph{energy} function ${\bf E}({\boldsymbol\theta}, {\bf v})$ as the sum of the potential energy, $U({\boldsymbol\theta})$ and $K({\boldsymbol\theta}, {\bf v})$, where $K({\boldsymbol\theta}, {\bf v}) = -\log(p({\bf v}))$. 

Analogous to RMHMC, thebacceptance probability is calculated based on ${\bf E}({\boldsymbol\theta}, {\bf v})$, which is the negative log of the joint density of parameter ${\boldsymbol\theta}$ and the new auxiliary variable ${\bf v}$. Therefore, we can apply the generalized leapfrog scheme to this new dynamic \eqref{rmld} and derive a semi-implicit method. (See more details in Appendix \ref{semi-RMLMC}.) Although the resulting integrator is not symplectic, we nonetheless have detailed balance with volume correction. Algorithm \ref{Alg:RMLMC} shows the corresponding steps for implementing this method.

\begin{algorithm}[t]
\caption{Semi-explicit Riemannian Manifold Lagrangian Monte Carlo (RMLMC)}
\label{Alg:RMLMC}
\begin{algorithmic}
\STATE Initialize ${\boldsymbol\theta}^{(1)} = \textrm{current}\; {\boldsymbol\theta}$
\STATE Sample new velocity ${\bf v}^{(1)}\sim \mathcal N(0,{\bf G}^{-1}({\boldsymbol\theta}^{(1)}))$
\STATE Calculate current ${\bf E}({\boldsymbol\theta}^{(1)}, {\bf v}^{(1)})$ according to equation \eqref{energyv}
\FOR{$n=1$ to $L$ (leapfrog steps)}
\STATE \% Update the velocity with fixed point iterations
\STATE ${\bf\hat v}^{(0)} = {\bf v}^{(n)}$
\FOR{$i=1$ to NumOfFixedPointSteps}
\STATE ${\bf\hat v}^{(i)} = {\bf v}^{(n)} - \frac{\eps}{2}{\bf G}({\boldsymbol\theta}^{(n)})^{-1}[({\bf\hat v}^{(i-1)})^{\textsf T}\tilde\Gamma({\boldsymbol\theta}^{(n)}) {\bf\hat v}^{(i-1)} + \nabla_{\boldsymbol\theta}\phi({\boldsymbol\theta}^{(n)})]$
\ENDFOR
\STATE ${\bf v}^{(n+1/2)} = {\bf\hat v}^{(last\; i)}$
\STATE \% {\it Update the position only with simple one step}
\STATE {\boldmath ${\boldsymbol\theta}^{(n+1)} = {\boldsymbol\theta}^{(n)} + \eps {\bf v}^{(n+1/2)}$}
\STATE $\Delta\log\det_n = \log \det ({\bf I}-\eps ({\bf v}^{(n+1/2)})^{\textsf T} \Gamma({\boldsymbol\theta}^{(n+1)})) - \log \det ({\bf I}+\eps ({\bf v}^{(n+1/2)})^{\textsf T} \Gamma({\boldsymbol\theta}^{(n)}))$
\STATE Update the velocity exactly
\STATE ${\bf v}^{(n+1)} = {\bf v}^{(n+1/2)} - \frac{\eps}{2}{\bf G}({\boldsymbol\theta}^{(n+1)})^{-1}[({\bf v}^{(n+1/2)})^{\textsf T}\tilde\Gamma({\boldsymbol\theta}^{(n+1)}) {\bf v}^{(n+1/2)} + \nabla_{\boldsymbol\theta}\phi({\boldsymbol\theta}^{(n+1)})]$
\ENDFOR
\STATE Calculate proposed ${\bf E}({\boldsymbol\theta}^{(L+1)}, {\bf v}^{(L+1)})$ according to equation \eqref{energyv}
\STATE logRatio = $-\textrm{ProposedH}+\textrm{CurrentH} + \sum_{n=1}^{N}\Delta\log\det_n$
\STATE Accept or reject according to Metropolis ratio
\end{algorithmic}
\end{algorithm}

In Appendix \ref{connection2L}, we show that the new dynamic \eqref{rmld} is essentially a \emph{Lagrangian} dynamic. Therefore, we refer to the derived algorithm (Algorithm \ref{Alg:RMLMC}) as Riemannian Manifold Lagrangian Monte Carlo (RMLMC), which explores the parameter space along the path on a Riemannian manifold that minimizes the total Lagrangian. We can use this new proposal-generating mechanism, RMLMC, which is based on an Euler-Lagrange system \eqref{rmld} of $({\boldsymbol\theta}, {\bf v})$ instead of the original Hamiltonian system \eqref{rmhd} defined in terms of $({\boldsymbol\theta}, {\bf p})$. The two methods use equivalent dynamics but differ numerically in the following way: RMHMC augments parameter space with momentum, while RMLMC augments parameter space with velocity. Later, we will show that switching to velocity leads to substantial improvement in computational efficiency.

\section{Explicit Riemannian Manifold Lagrangian Monte Carlo}
We now propose a fully explicit integrator for Lagrangian dynamics \eqref{rmld} as follows: 
\begin{eqnarray}\label{TSvsolved}
{\bf v}^{(n+1/2)} & = & [{\bf I} + \frac{\eps}{2} {\bf\Omega}({\boldsymbol\theta}^{(n)},{\bf v}^{(n)})]^{-1}
[{\bf v}^{(n)} - \frac{\eps}{2} {\bf G}({\boldsymbol\theta}^{(n)})^{-1}
\nabla_{\boldsymbol\theta}\phi({\boldsymbol\theta}^{(n)})]  \label{TSvsolved:1}\\
{\boldsymbol\theta}^{(n+1)} &=& {\boldsymbol\theta}^{(n)} + \epsilon{\bf v}^{(n+1/2)} \\
{\bf v}^{(n+1)} & = & [{\bf I} + \frac{\eps}{2} {\bf\Omega}({\boldsymbol\theta}^{(n + 1)},{\bf v}^{(n + \frac 1 2)})]^{-1}[{\bf v}^{(n+1/2)} - \frac{\eps}{2} {\bf G}({\boldsymbol\theta}^{(n+1)})^{-1}\nabla_{\boldsymbol\theta}\phi({\boldsymbol\theta}^{(n+1)})] \label{TSvsolved:2}
\end{eqnarray}
where ${\bf\Omega}({\boldsymbol\theta}^{(n)},{\bf v}^{(n)})$ is a matrix whose $(i,j)$th element is $\sum_k {\bf v}^{(n)}_k\Gamma_{kj}^i({\boldsymbol\theta}^{(n)})$. This integrator is (i) reversible and (ii) energy-preserving up to order ${\cal O}(\eps)$, where $\eps$ is the stepsize. The resulting map, however, is not volume-preserving and as such the effective proposal distribution will be the product of a delta function and the determinant of the transformation,
\begin{equation*}
\frac{\exp(-{\bf E}({\boldsymbol\theta}^{(L)}, {\bf v}^{(L)}))}{\exp(-{\bf E}({\boldsymbol\theta}^{(1}, {\bf v}^{(1)}))} 
\times 
\frac
{\delta(({\boldsymbol\theta}^{(L)}, {\bf v}^{(L)}), ({\boldsymbol\theta}^{(1)}, {\bf v}^{(1)}))}
{\delta(({\boldsymbol\theta}^{(1)}, {\bf v}^{(1)}), ({\boldsymbol\theta}^{(L)}, {\bf v}^{(L)}))} \times \det {\bf J}  
\end{equation*}
which simplifies to
\begin{equation}\label{acptsv}
\exp({\bf E}({\boldsymbol\theta}^{(1)}, {\bf v}^{(1)})-{\bf E}({\boldsymbol\theta}^{(L)}, {\bf v}^{(L)})) \times \det {\bf J}
\end{equation}
with ${\bf J}$ the Jacobian matrix of $({\boldsymbol\theta}^{(1)}, {\bf v}^{(1)})\to ({\boldsymbol\theta}^{(L)}, {\bf v}^{(L)})$. Detailed derivations and proofs are given in Appendix \ref{xplct-RMLMC}.

\begin{algorithm}[t]
\caption{Explicit Riemannian Manifold Lagrangian Monte Carlo (e-RMLMC)}
\label{Alg:e-RMLMC}
\begin{algorithmic}
\STATE Initialize ${\boldsymbol\theta}^{(1)} = \textrm{current}\; {\boldsymbol\theta}$
\STATE Sample new velocity ${\bf v}^{(1)}\sim \mathcal N(0,{\bf G}({\boldsymbol\theta}^{(1)})^{-1})$
\STATE Calculate current ${\bf E}({\boldsymbol\theta}^{(1)},{\bf v}^{(1)})$ according to equation \eqref{energyv}
\STATE $\Delta \log\det = 0$
\FOR{$n=1$ to $L$}
\STATE $\Delta \log\det = \Delta \log\det - \det({\bf G}({\boldsymbol\theta}^{(n)})+\eps/2 {\bf\tilde\Omega}({\boldsymbol\theta}^{(n)},{\bf v}^{(n)}))$
\STATE Update the velocity {\bf explicitly} with a half step:
\STATE ${\bf v}^{(n+1/2)}\! =\! [{\bf G}({\boldsymbol\theta}^{(n)})\!+\!\frac{\eps}{2} {\bf\tilde\Omega}({\boldsymbol\theta}^{(n)},{\bf v}^{(n)})]^{-1}[{\bf G}({\boldsymbol\theta}^{(n)}){\bf v}^{(n)}\!-\!\frac{\eps}{2} \nabla_{\boldsymbol\theta}\phi({\boldsymbol\theta}^{(n)})]$
\STATE $\Delta \log\det = \Delta \log\det + \det({\bf G}({\boldsymbol\theta}^{(n)})-\eps/2 {\bf\tilde\Omega}({\boldsymbol\theta}^{(n)},{\bf v}^{(n+1/2)}))$
\STATE Update the position with a full step:
\STATE {\boldmath ${\boldsymbol\theta}^{(n+1)} = {\boldsymbol\theta}^{(n)} + \eps {\bf v}^{(n+\frac{1}{2})}$}
\STATE $\Delta \log\det = \Delta \log\det - \det({\bf G}({\boldsymbol\theta}^{(n+1)})+\eps/2 {\bf\tilde\Omega}({\boldsymbol\theta}^{(n+1)},{\bf v}^{(n+1/2)}))$
\STATE Update the velocity {\bf explicitly} with a half step:
\STATE ${\bf v}^{(n+1)}\! =\! [{\bf G}({\boldsymbol\theta}^{(n+1)})\!+\!\frac{\eps}{2} {\bf\tilde\Omega}({\boldsymbol\theta}^{(n+1)},{\bf v}^{(n+1/2)})]^{-1}[{\bf G}({\boldsymbol\theta}^{(n+1)}){\bf v}^{(n+1/2)}\!-\!\frac{\eps}{2} \nabla_{\boldsymbol\theta}\phi({\boldsymbol\theta}^{(n+1)})]$
\STATE $\Delta \log\det = \Delta \log\det + \det({\bf G}({\boldsymbol\theta}^{(n+1)})-\eps/2 {\bf\tilde\Omega}({\boldsymbol\theta}^{(n+1)},{\bf v}^{(n+1)}))$
\ENDFOR
\STATE Calculate proposed ${\bf E}({\boldsymbol\theta}^{(L+1)},{\bf v}^{(L+1)})$ according to equation \eqref{energyv}
\STATE logRatio = $-\textrm{Proposed{\bf E}}+\textrm{Current{\bf E}} + \Delta \log\det$
\STATE Accept or reject the proposed state according to \eqref{acptsv}
\end{algorithmic}
\end{algorithm}

We refer to this approach as explicit Riemannian Manifold Lagrangian Monte Carlo (e-RMLMC). Algorithm \ref{Alg:e-RMLMC} shows the corresponding steps for this method. In this algorithm, we use ${\bf\tilde\Omega}({\boldsymbol\theta},{\bf v})$ to denote ${\bf G}({\boldsymbol\theta}){\bf\Omega}({\boldsymbol\theta},{\bf v})$ whose $(k,j)$th element is equal to $\sum_{i} {\bf v}^{i}\tilde\Gamma_{ij}^{k}({\boldsymbol\theta})$.

Our proposed e-RMLMC does not involve implicit functions for updating $({\boldsymbol\theta},{\bf v})$ in RMHMC. Because we remove multiple fixed-point iteration steps, we reduce the computation time by $\mathcal O(D^{2})$ where $D$ is the dimension of the parameters. Additionally, using this explicit updating, we resolve the convergence issue faced by fixed-point iterations. The connection terms $\tilde \Gamma({\boldsymbol\theta})$ in ${\bf\tilde\Omega}$ do not add substantial computational cost since they are obtained from permuting three dimensions of the array $\pa {\bf G}({\boldsymbol\theta})$, which is computed in RMHMC. However, besides ${\bf G}({\boldsymbol\theta})^{-1}$, which is required for $\nabla_{\boldsymbol\theta}\phi({\boldsymbol\theta})$, e-RMLMC has two extra matrix inversions to update ${\bf v}$, whose complexity in general is $\mathcal O(D^{3})$. Therefore, as dimension grows, the efficiency gained by removing multiple fixed-point iterations may be overwhelmed by this additional overhead. This is evident from our experimental results presented in Section \ref{experiments}. Faster matrix inversion algorithms could be used to alleviate this issue.

\section{Experimental Results} \label{experiments}
In this section, we use simulated and real data to evaluate our methods, RMLMC and e-RMLMC, compared to RMHMC. Following \cite{girolami11}, we use a time-normalized effective sample size (ESS) to compare these methods. For $B$ posterior samples we calculate ESS = $B[1 + 2\Sigma_{k}\gamma(k)]^{-1}$ for each parameter and choose the minimum as the measure of sampling efficiency, where $\Sigma_{k}\gamma(k)$ is the sum of the $K$ monotone sample autocorrelations estimated by the initial monotone sequence estimator \citep{geyer92}. All computer programs and data sets discussed in this paper are available online at \url{http://www.ics.uci.edu/~babaks/Site/Codes.html}.

\begin{table}[!htbp]
\begin{center}
\begin{tabular}{l|l|cccc}
  \hline
Data & method & AP & s & ESS & min(ESS)/s \\ 
  \hline
 & RMHMC & 0.74 & 2.40E-02 & (8561,9595,10262) & 23.77 \\ 
  Australian & RMLMC & 0.75 & 1.90E-02 & (8038,10488,11468) & 28.24 \\ 
  D=14,N=690 & e-RMLMC & 0.75 & 1.55E-02 & (9636,10443,11268) & 41.34 \\ 
   \hline
 & RMHMC & 0.77 & 5.63E-02 & (15000,15000,15000) & 17.76 \\ 
  German & RMLMC & 0.73 & 4.32E-02 & (15000,15000,15000) & 23.17 \\ 
  D=24,N=1000 & e-RMLMC & 0.70 & 3.59E-02 & (13762,15000,15000) & 25.57 \\ 
   \hline
 & RMHMC & 0.77 & 1.65E-02 & (7050,8369,8905) & 28.57 \\ 
  Heart & RMLMC & 0.78 & 1.09E-02 & (10847,11704,12405) & 66.25 \\ 
  D=13,N=270 & e-RMLMC & 0.76 & 1.01E-02 & (10347,10724,11773) & 68.18 \\ 
   \hline
 & RMHMC & 0.81 & 1.25E-02 & (4325,4622,4980) & 23.10 \\ 
  Pima & RMLMC & 0.82 & 7.27E-03 & (4713,5448,5576) & 43.20 \\ 
  D=7,N=532 & e-RMLMC & 0.82 & 7.04E-03 & (4839,5193,5539) & 45.85 \\ 
   \hline
 & RMHMC & 0.78 & 8.39E-03 & (15000,15000,15000) & 119.20 \\ 
  Ripley & RMLMC & 0.76 & 5.07E-03 & (13498,15000,15000) & 177.43 \\ 
  D=2,N=250 & e-RMLMC & 0.79 & 4.77E-03 & (12611,15000,15000) & 176.37 \\ 
   \hline
\end{tabular}
\caption{Comparing alternative methods using five binary classification problems discussed in \cite{girolami11}. For each dataset, the number of predictors, $D$, and the number of observations, $N$, are specified. For each method, we provide the acceptance probability (AP), the CPU time (s) for each iteration, and the time-normalized ESS.}
\label{realLR}
\end{center}
\end{table}

\subsection{Logistic Regression Models}
We start by evaluating our methods based on five binary classification problems used in \cite{girolami11}. These are Australian Credit data, German Credit data, Heart data, Pima Indian data, and Ripley data. For each problem, we use a logistic regression model and run 20000 MCMC iterations. Results (after discarding the initial 5000 iterations) are summarized in Table \ref{realLR}, and show that in general our methods improve the sampling efficiency measured in terms of ESS per second compared to RMHMC.

\subsection{Simulated Logistic Regression}
Next, we construct some synthetic datasets with variable numbers of observations and increasing dimensionality for logistic regression. The simulated results are summarized in Table \ref{simLR}. In general, our methods RMLMC and e-RMLMC improve RMHMC in minimal ESS per second, but as expected, such an advantage gradually diminishes as the dimension increases.

\begin{table}[!htbp]
\begin{center}
	\begin{tabular} {l|l|cccc}
		\hline
			Data & Method & AP & s & ESS & min(ESS)/s \\
		\hline
			\multirow{3}{*}{N=200,D=10} & RMHMC & 0.84 & 1.85e+02 & (4837, 4902, 4968) & 26.20 \\ 
			 & RMLMC & 0.86 & 4.44e+01 & (5000, 5000, 5000) & 112.60 \\ 
			 & e-RMLMC & 0.83 & 7.42e+01 & (3792, 4310, 4671) & 51.11 \\
		\hline
			\multirow{3}{*}{N=400,D=20} & RMHMC & 0.82 & 9.56e+02 & (4727, 4893, 5000) & 4.95 \\ 
			 & RMLMC & 0.80 & 1.28e+02 & (4680, 4819, 4857) & 36.44 \\ 
			 & e-RMLMC & 0.81 & 2.59e+02 & (2964, 3543, 3968) & 11.46 \\
		\hline
			\multirow{3}{*}{N=800,D=40} & RMHMC & 0.82 & 3.15e+03 & (4691, 4983, 5000) & 1.49 \\ 
			 & RMLMC & 0.82 & 6.88e+02 & (4749, 4836, 4960) & 6.91 \\
			 & e-RMLMC & 0.81 & 1.09e+03 & (2902, 3636, 4127) & 2.65 \\
		\hline
			\multirow{3}{*}{N=1600,D=80} & RMHMC & 0.81 & 9.87e+03 & (3712, 4515, 4950) & 0.38 \\ 
			 & e-RMLMC & 0.83 & 4.64e+03 & (4002, 4672, 4919) & 0.86 \\
			 & e-RMLMC & 0.80 & 1.19e+04 & (2565, 3415, 4081) & 0.22 \\
		\hline 
			\multirow{3}{*}{N=3200,D=160} & RMLMC & 0.79 & 1.63e+05 & (3160, 3959, 4464) & 0.02 \\ 
			 & RMLMC & 0.83 & 1.44e+05 & (3458, 4221, 4676) & 0.02 \\
			 & e-RMLMC & 0.80 & 1.20e+05 & (2708, 3548, 4156) & 0.02 \\
		\hline
	\end{tabular}
	\caption{Time, ESS and time-normalized ESS for logistic regression with simulated datasets. Results are calculated on a 5,000 sample chain with a 5,000 sample burn-in session.}
	\label{simLR}
\end{center}
\end{table}

\subsection{Simulating a banana-shaped distribution}
The banana-shaped distribution, which we used above for illustration, can be constructed as the posterior distribution of $\theta=(\theta_{1},\theta_{2})|y$ based on the following model:
\begin{eqnarray*}
y|\theta & \sim & N(\theta_{1}+\theta_{2}^{2},\sigma^{2}_{y})\\
\theta & \sim & N(0,\sigma^{2}_{\theta})
\end{eqnarray*}
The data $\{y_{i}\}_{i=1}^{100}$ are generated with $\theta_{1}+\theta_{2}^{2}=1,\sigma_{y}=2$. We set $\sigma_{\theta}=1$.

We want to investigate how the three algorithms, RMHMC, RMLMC, e-RMLMC, explore the parameter space. Fig.\ref{fig:comp3} shows the first 10 iterations for each algorithm using fixed trajectory length of 1.45.
\begin{figure}[!t]
 \centerline{
\includegraphics[width=6.5in]{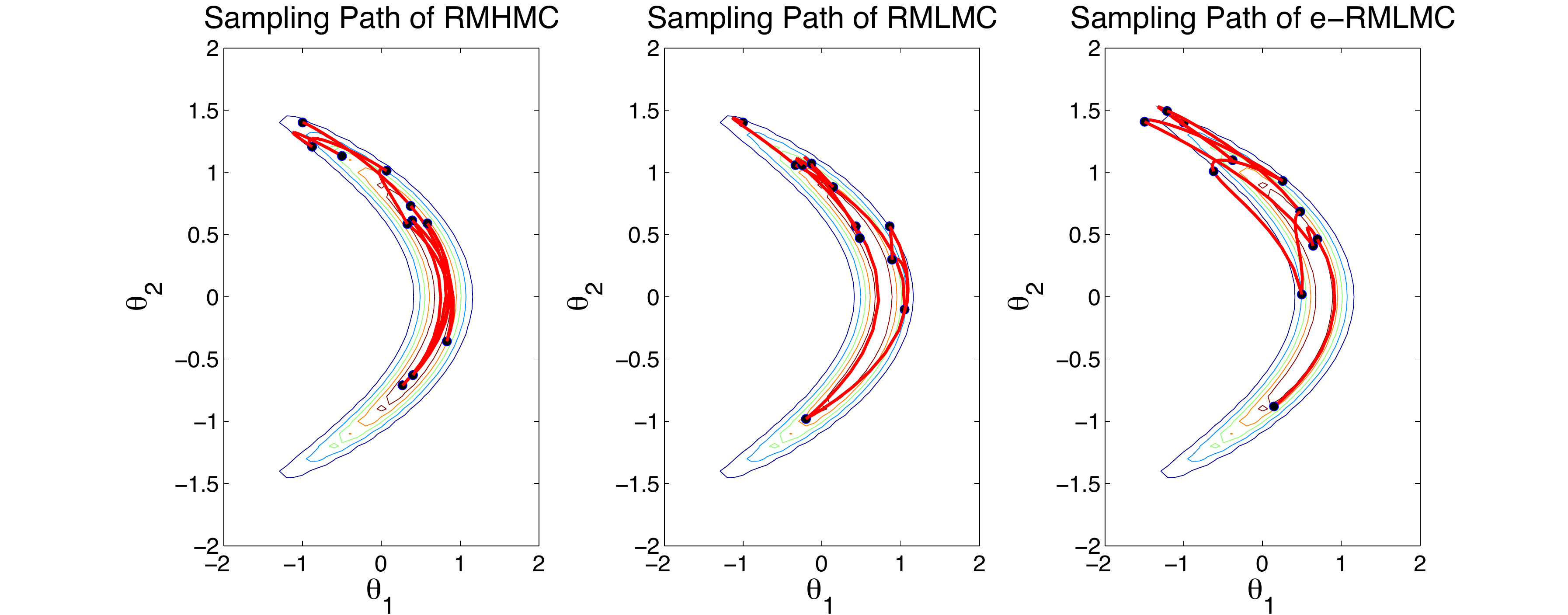}
}
\caption{The first 10 iterations in sampling from the banana-shaped distribution with Riemannian Manifold HMC (RMHMC), Riemannian Manifold Lagrange Monte Carlo (RMLMC) and explicit RMLMC (e-RMLMC). For all three methods, the trajectory length (i.e., step size times number of integration steps) is set to 1.45. Solid red lines show the sampling path, and each point represents an accepted proposal.}
\label{fig:comp3}
 \end{figure}
We can see that RMLMC and e-RMLMC explore the parameter space according to its curvature and mix quickly, similar to RMHMC. 

Table \ref{toy} compares the performances of these algorithms based on 5000 MCMC iterations (after burning the initial 1000 iteration). For this example, RMLMC has the highest ESS/s. As discussed above, the additional overhead of matrix inversion for e-RMLMC occasionally overwhelms its gain in computational efficiency. 
\begin{table}[!htbp]
\begin{center}
\begin{tabular}{ccccc}
  \hline
Method &  AP & s & ESS & min(ESS)/s \\ 
  \hline
  RMHMC & 0.73 & 8.21e-03 & (729,1117,1506) & 17.76 \\ 
  RMLMC & 0.79 & 5.36e-03 & (857,1317,1777) & 31.99 \\ 
  e-RMLMC & 0.78 & 5.61e-03 & (585,1085,1585) & 20.85 \\ 
   \hline
\end{tabular}
\caption{Comparing alternative methods using a banana-shaped distribution. For each method, we provide the acceptance probability (AP), the CPU time (s) for each iteration, and the time-normalized ESS.}
\label{toy}
\end{center}
\end{table}

\subsection{Finite Mixture of Gaussians}
Finally we consider finite mixtures of univariate Gaussian components of the form 
\begin{equation}
p(x_i|\boldsymbol{\theta}) = \sum_{k=1}^K\pi_k\mathcal{N}(x_i|\mu_k,\sigma^2_k) 
\label{eq:likelihood}
\end{equation}
where $\boldsymbol{\theta}$ is the vector of size $D=3K$ of all the parameters $\pi_k$, $\mu_k$ and $\sigma^2_k$ and $\mathcal{N}(\cdot|\mu,\sigma^2)$ is a Gaussian density with mean $\mu$ and variance $\sigma^2$. A common choice of prior takes the form 
\begin{equation}
p(\boldsymbol{\theta}) = \mathcal{D}(\pi_1,\dots,\pi_K | \lambda)\prod_{k=1}^K \mathcal{N}(\mu_k|m,\beta^{-1}\sigma_k^2)\mathcal{IG}(\sigma^2_k| b,c)
\label{eq:prior}
\end{equation}
where $ \mathcal{D}(\cdot|\lambda)$ is the symmetric Dirichlet distribution with parameter $\lambda$ and $\mathcal{IG}(\cdot|b,c)$ is the inverse Gamma distribution with shape parameter $b$ and scale parameter $c$. 

Although the posterior distribution associated with this model is formally explicit, it is computationally intractable, since it can be expressed as a sum of $K^N$ terms corresponding to all possible allocations of observations $x_i$ to mixture components \cite[chap. 9]{marin05}. We want to use this model to test the efficiency of posterior sampling $\boldsymbol{\theta}$ using the three methods. A more extensive comparison of Riemannian Manifold MCMC and HMC, Gibbs sampling and standard Metropolis-Hastings for finite Gaussian mixture models can be found at \cite{Stathopoulos11}.
Due to the non-analytic nature of the expected Fisher Information, $\boldsymbol{I}(\boldsymbol{\theta})$, we use the empirical Fisher information as metric tensor, defined in \cite[chap. 2]{mclachlan00}:
\begin{equation*}
\boldsymbol{G}(\boldsymbol{\theta}) = \boldsymbol{S}^T\boldsymbol{S}  - \frac{1}{N}{\boldsymbol{s}}{\boldsymbol{s}}^T \label{eq:eFI}
\end{equation*}
where $N\times D$ score matrix $\boldsymbol{S}$ has elements $S_{i,d}=\frac{\partial \log p(x_i|\boldsymbol{\theta})}{\partial \theta_d}$ and ${\boldsymbol{s}}=\sum_{i=1}^N\boldsymbol{S}^T_{i,\cdot}$.

Depending on allocations of $\pi$, we show several classical mixtures showing in the following Table \ref{tab:dens} and Figure \ref{fig:dens}. Their sampling efficiency is compared in Table \ref{MOG}. As before, our two algorithms outperform RMHMC. 

\begin{table}[t]
\begin{center}
\begin{small} 
\begin{tabular}{@{}ccc@{}} 
\toprule 
Dataset & Density function & Num. of \\
name &  & parameters \\ 
\midrule 
Kurtotic & $ \frac{2}{3}\mathcal{N}(x | 0,1)+\frac{1}{3}\mathcal{N}\left(x|0,\left(\frac{1}{10}\right)^2 \right)$ & 6\\
Bimodal & $\frac{1}{2}\mathcal{N}\left(x | -1,\left(\frac{2}{3}\right)^2\right)+\frac{1}{2}\mathcal{N}\left(x|1,\left(\frac{2}{3}\right)^2 \right) $ & 6 \\
Skewed & $ \frac{3}{4}\mathcal{N}\left(x | 0,1 \right)+\frac{1}{4}\mathcal{N}\left(x|\frac{3}{2},\left(\frac{1}{3}\right)^2 \right)$ & 6\\
Trimodal & $\frac{9}{20}\mathcal{N}\left(x | -\frac{6}{5},\left(\frac{3}{5}\right)^2 \right)+\frac{9}{20}\mathcal{N}\left(x|\frac{6}{5},\left(\frac{3}{5}\right)^2 \right)+\frac{1}{10}\mathcal{N}\left(x|0,\left(\frac{1}{4}\right)^2 \right) $ & 9\\
Claw & $\frac{1}{2}\mathcal{N}\left(x |0,1\right)+\sum_{i=0}^4\frac{1}{10}\mathcal{N}\left(x|\frac{i}{2}-1,\left(\frac{1}{10}\right)^2 \right) $ &18 \\
\toprule 
\end{tabular}
\end{small} 
\caption{Densities used for the generation of synthetic Mixture of Gaussian data sets. \label{tab:dens}}{} 
\end{center}
\end{table}

\begin{figure}[t]
\begin{center}
\includegraphics[width=0.19\textwidth]{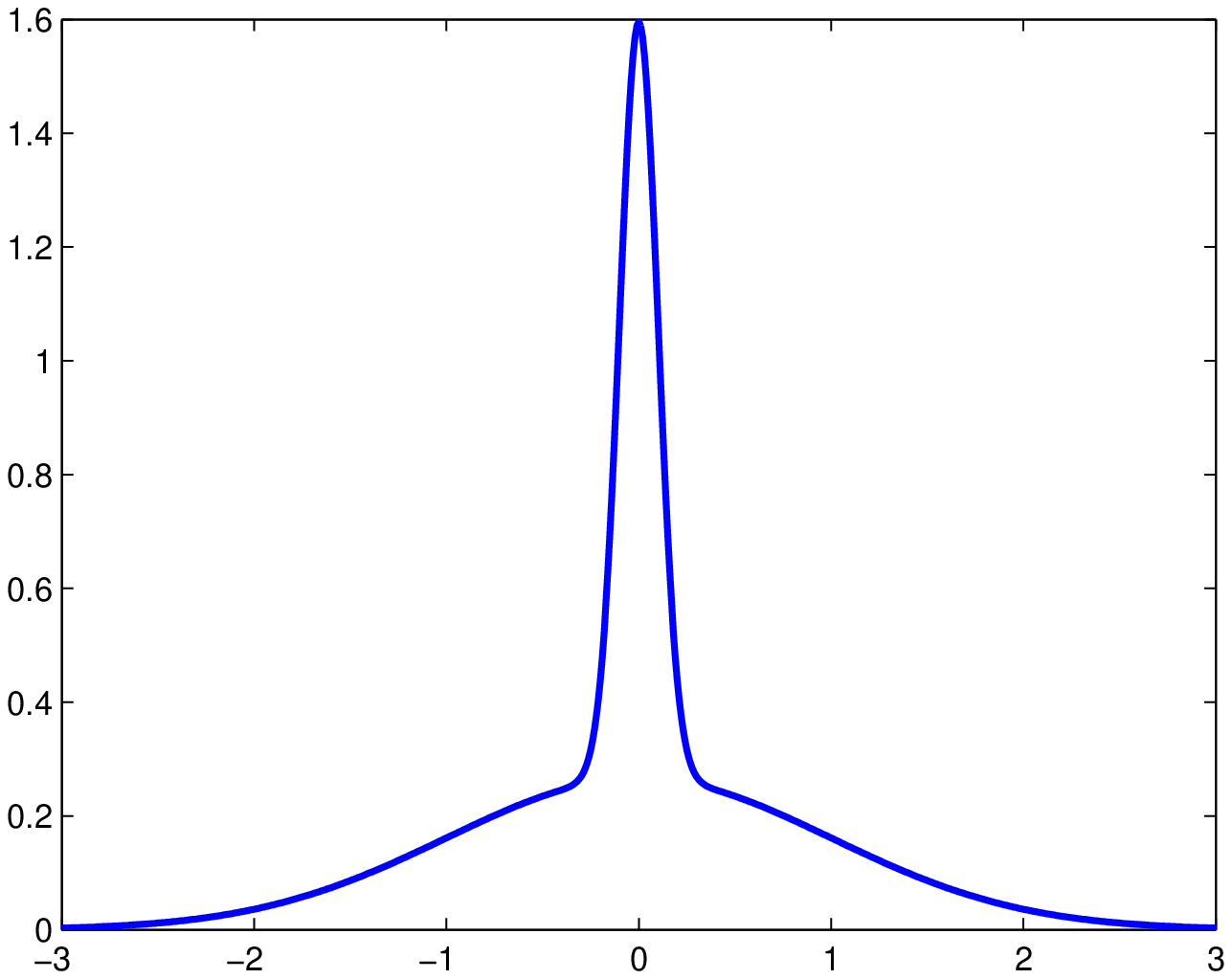}\includegraphics[width=0.19\textwidth]{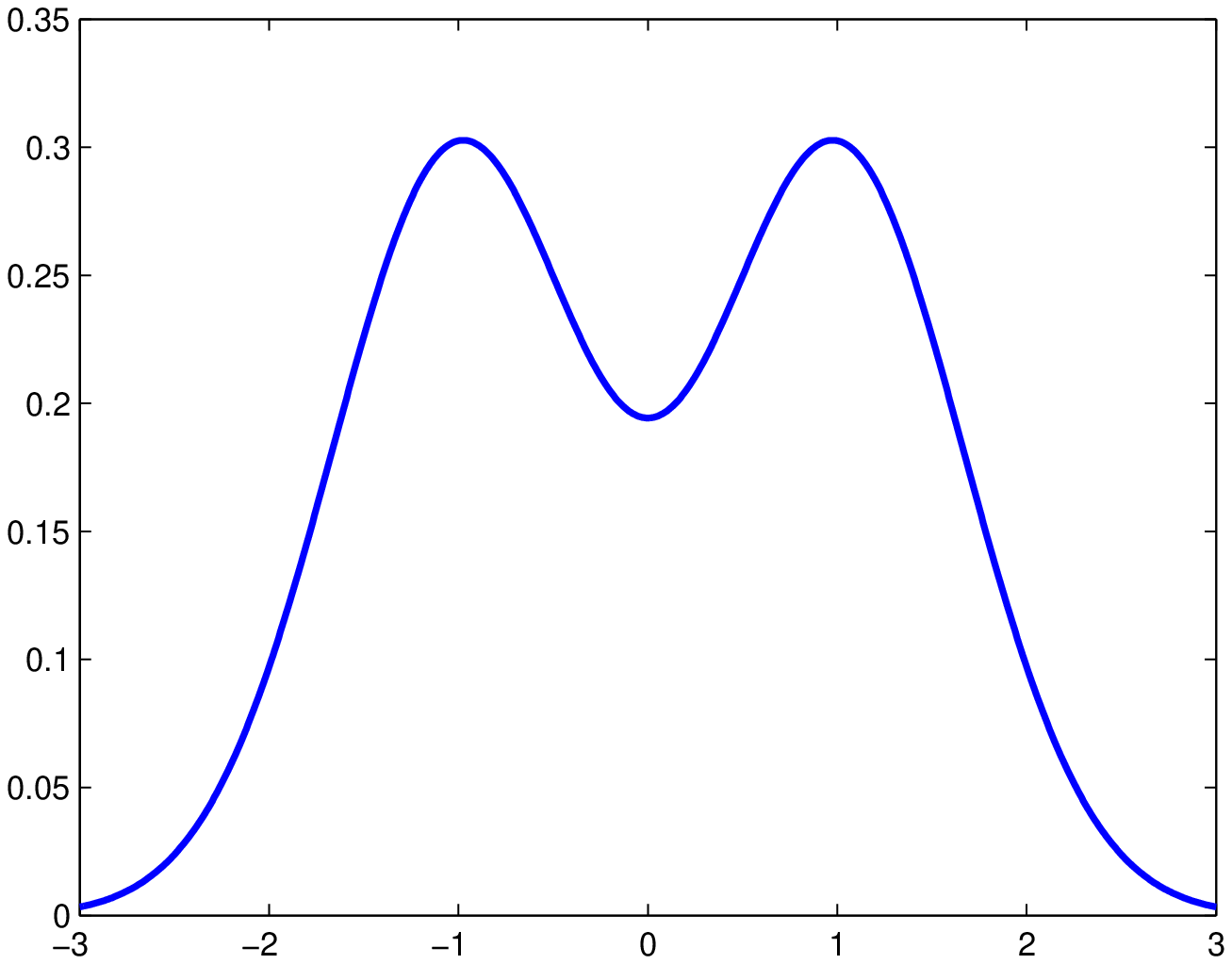}
\includegraphics[width=0.19\textwidth]{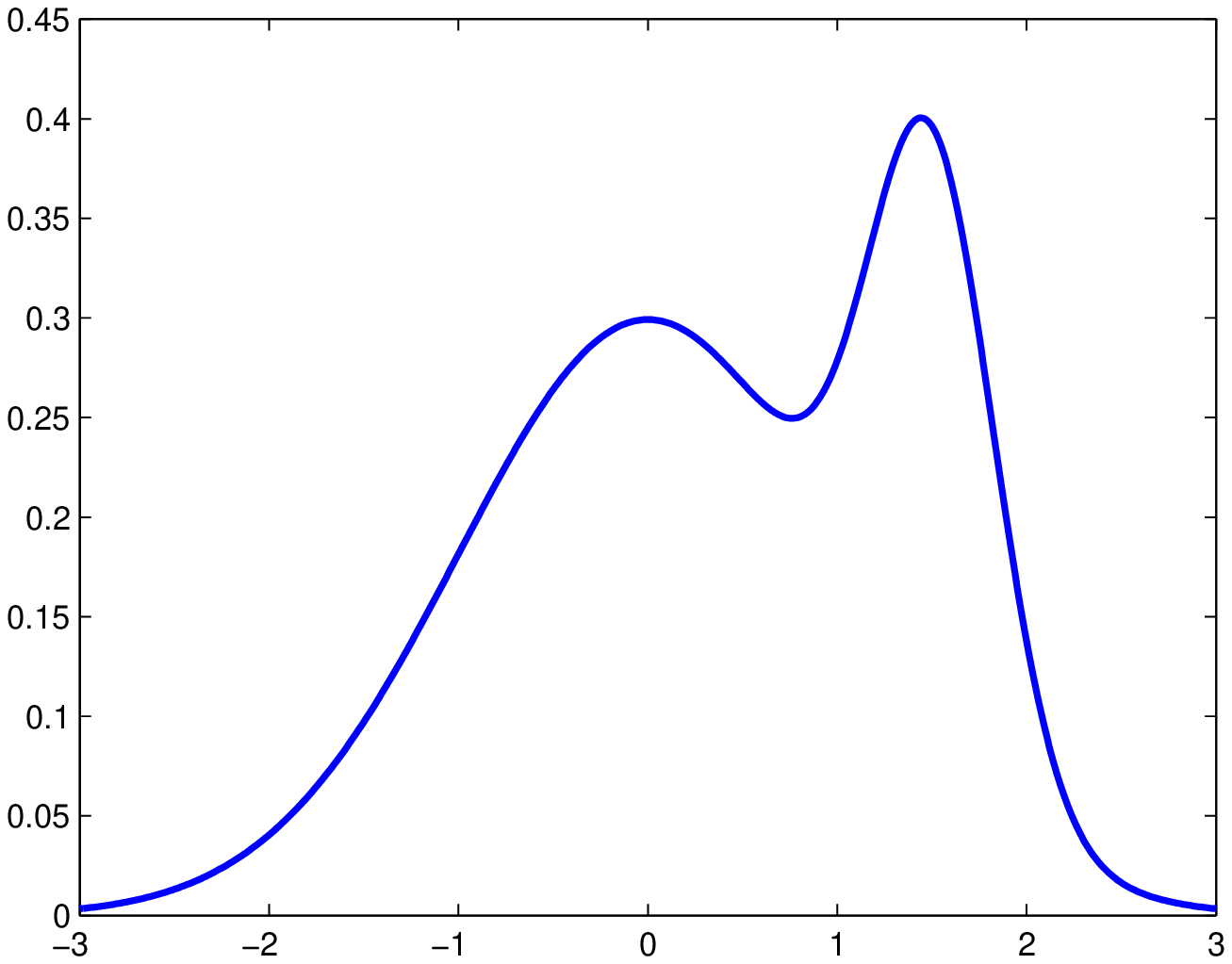}\includegraphics[width=0.19\textwidth]{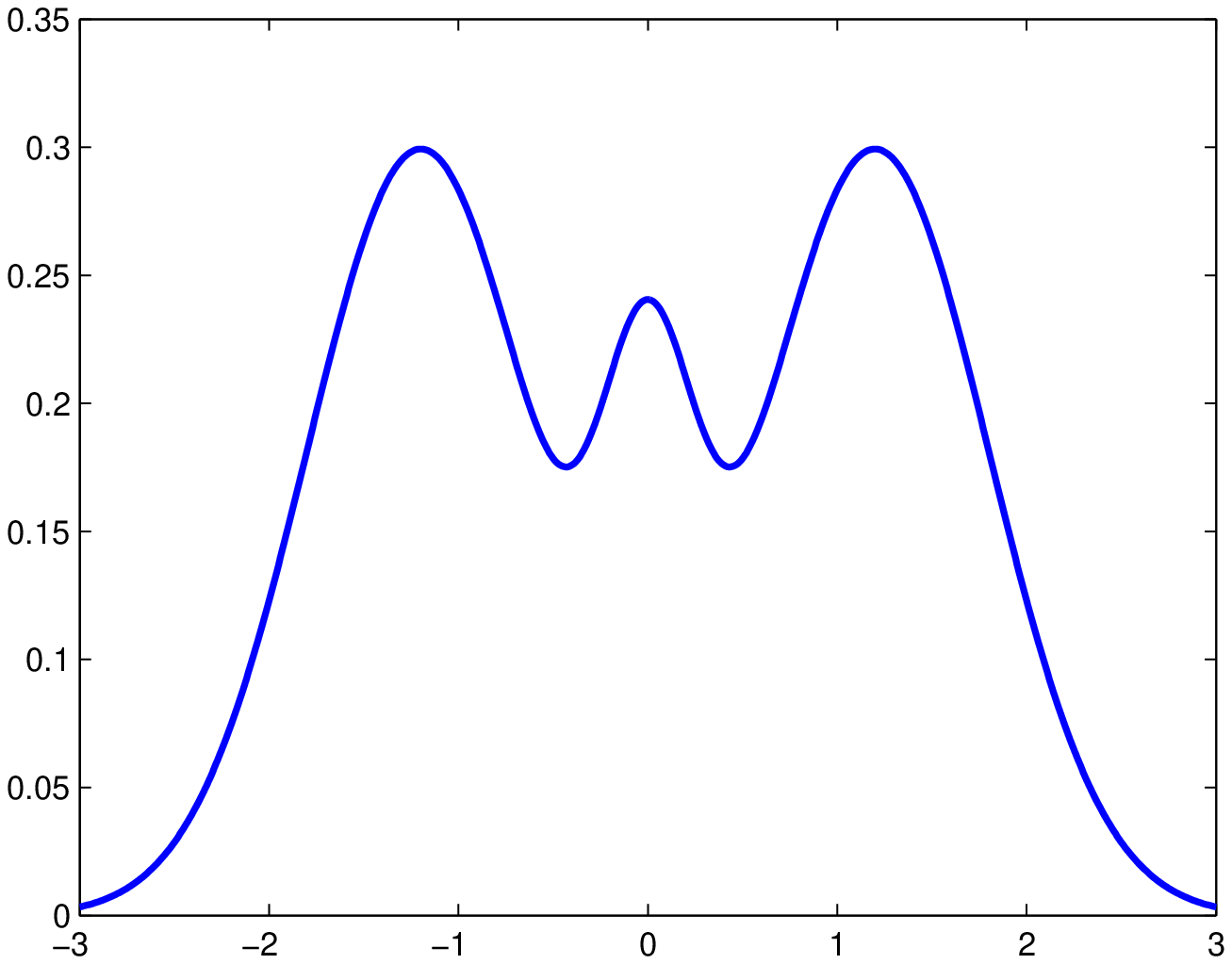}
\includegraphics[width=0.19\textwidth]{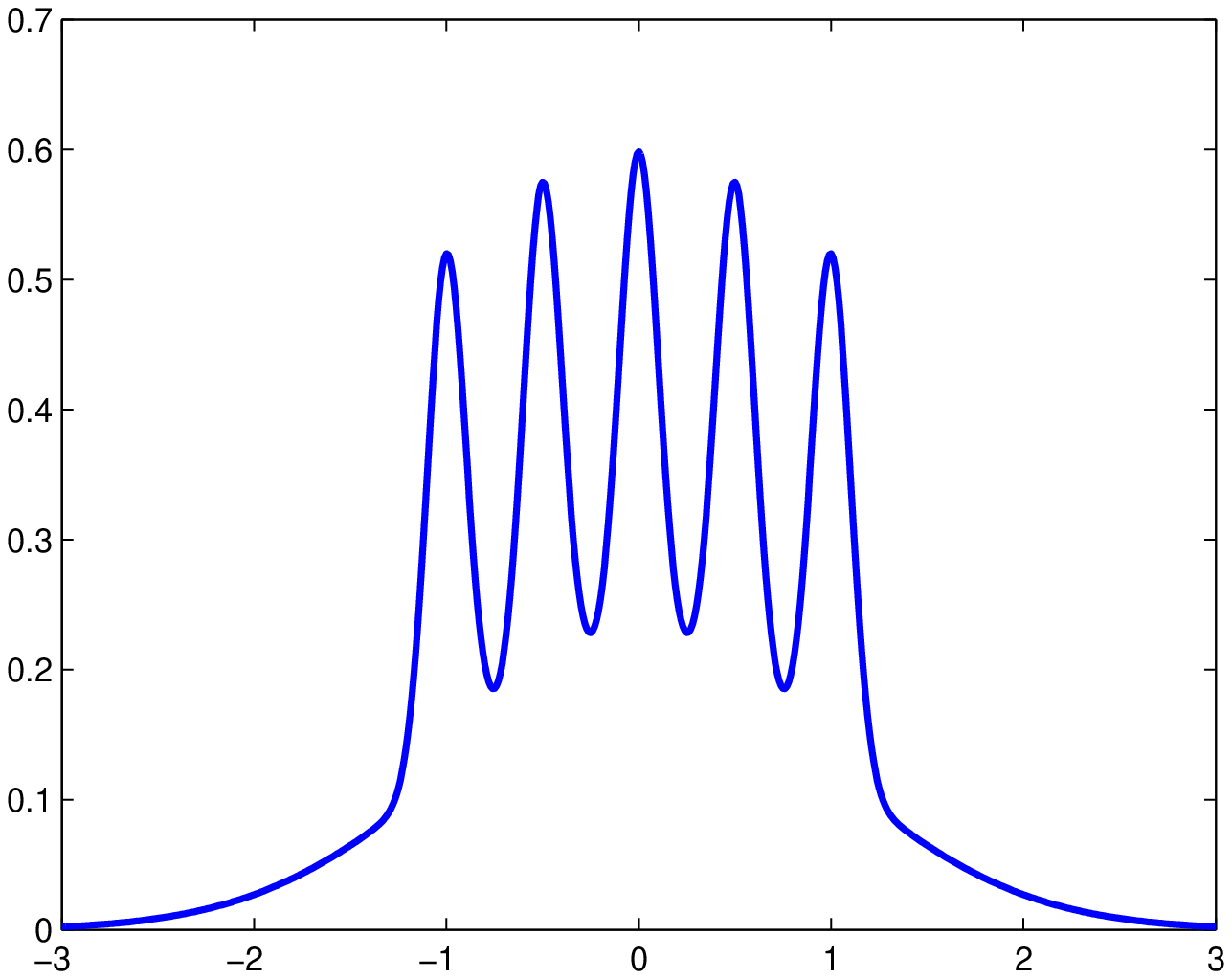}
\caption{Densities used to generate synthetic datasets. From left to right the densities are in the same order as in Table \ref{tab:dens}. The densities are taken from \cite{mclachlan00}}
\label{fig:dens}
\end{center}
\end{figure}

\begin{table}[!htbp]
\begin{center}
	\begin{tabular} {l|l|cccc}
		\hline
			Data & Method & AP & s & ESS & min(ESS)/s \\
		\hline
			\multirow{3}{*}{claw} & RMHMC & 0.80 & 2.54e+03 & (1524, 3474, 4586) & 0.60 \\ 
			 & RMLMC & 0.86 & 1.88e+03 & (2531, 4332, 5000) & 1.35 \\
			 & e-RMLMC & 0.82 & 1.46e+03 & (2436, 3455, 4608) & 1.67 \\
		\hline			 
			\multirow{3}{*}{trimodal} & RMHMC & 0.79 & 4.97e+02 & (4701, 4928, 5000) & 9.46 \\ 
			 & RMLMC & 0.82 & 2.01e+02 & (4978, 5000, 5000) & 24.77 \\
			 & e-RMLMC & 0.80 & 2.42e+02 & (4899, 4982, 5000) & 20.21 \\
		\hline
			\multirow{3}{*}{skewed} & RMHMC & 0.85 & 2.55e+02 & (5000, 5000, 5000) & 19.63 \\ 
			 & RMLMC & 0.82 & 1.13e+02 & (4698, 4940, 5000) & 41.68 \\
			 & e-RMLMC & 0.84 & 1.26e+02 & (4935, 5000, 5000) & 39.09 \\
		\hline
			\multirow{3}{*}{kurtotic} & RMHMC & 0.82 & 2.36e+02 & (5000, 5000, 5000) & 21.20 \\ 
			 & RMLMC & 0.85 & 1.27e+02 & (5000, 5000, 5000) & 39.34 \\
			 & e-RMLMC & 0.81 & 1.35e+02 & (5000, 5000, 5000) & 36.90 \\
		\hline
			\multirow{3}{*}{bimodal} & RMHMC & 0.86 & 2.69e+02 & (5000, 5000, 5000) & 18.56 \\ 
			 & RMLMC & 0.81 & 1.03e+02 & (4935, 4996, 5000) & 48.00 \\
			 & e-RMLMC & 0.85 & 1.08e+02 & (5000, 5000, 5000) & 46.43 \\
		\hline
	\end{tabular}
	\caption{Time, ESS and time-normalized ESS for Gaussian mixture models. Results are calculated on a 5,000 sample chain with a 5,000 sample burn-in session.}
	\label{MOG}
\end{center}
\end{table}

\section{Conclusions and Discussion}
Following the method of \cite{girolami11} for more efficient exploration of parameter space, we have proposed new sampling schemes to reduce the computational cost associated with using a position-specific mass matrix. To this end, we have developed a semi-explicit (RMLMC) integrator and a fully explicit (e-RMLMC) integrator for RMHMC and demonstrated their advantage in improving computational efficiency over the generalized leapfrog (RMHMC) method used by \cite{girolami11}. It is easy to show that for ${\bf G}({\boldsymbol\theta})= {\bf I}$, our method degenerates to standard HMC.

Future directions could involve splitting Hamiltonian \citep{dullweber87,sexton92,neal10,shahbabaSplitHMC} to develop explicit geometric integrators. For example, one could split a non-separable Hamiltonian dynamics into several smaller dynamics some of which can be analytically solved. A similar idea has been explored by \citep{chin09}, where the Hamiltonian, instead of the dynamic, is split.

Another possible research direction could be to approximate the mass matrix (Christofell Symbols). For many large-dimensional problems, the mass matrix could be appropriately approximated by a highly sparse matrix. This could further improve our method's computational efficiency.

\bibliographystyle{jasa}

\newpage
\begin{center}
{\huge \bf Appendix: Derivations and Proofs}
\end{center}
\appendix
In what follows, we show the detailed derivations of our methods. We adopt Einstein notation (summation convention), so whenever the index appears twice in a mathematical expression, we sum over it: e.g., $a_{i}b^{i}:=\sum_{i}a_{i}b^{i}$, $\Gamma_{ij}^{k}v^{i}v^{j}:=\sum_{i,j}\Gamma_{ij}^{k}v^{i}v^{j}$. A lower index is used for the covariant tensor, whose components vary by the same transformation as the change of basis�e.g., gradient�whereas the upper index is reserved for the contravariant tensor, whose components vary in the opposite way as the change of basis in order to compensate: e.g. velocity vector. Interested readers should refer to \cite{bishop80}.

\section{Transformation of Hamiltonian Dynamics}\label{derRMLD}
To derive the dynamic \eqref{rmld} from the Hamiltonian dynamic \eqref{rmhd}, the first equation in \eqref{rmld} is directly obtained from the assumed transformation: $\dot {\boldsymbol\theta}^k = {\bf g}^{kl}{\bf p}_l={\bf v}^k$. For the second equation in \eqref{rmld}, we have
\[
\dot {\bf p}_{l} = \frac{d ({\bf g}_{lj}({\boldsymbol\theta}){\bf v}^{j})}{dt} =\frac{\pa {\bf g}_{lj}}{\pa {\boldsymbol\theta}_{i}}\dot {\boldsymbol\theta}^{i}{\bf v}^{j} + {\bf g}_{lj}\dot {\bf v}^{j} = \pa_{i} {\bf g}_{lj}{\bf v}^{i}{\bf v}^{j} + {\bf g}_{lj}\dot {\bf v}^{j}
\]
Further, from Equation \eqref{rmhd} we have
\[
\begin{split}
\dot {\bf p}_{l} & = -\pa_{l} \phi({\boldsymbol\theta}) + \frac{1}{2}{\bf v}^{\textsf T}\pa_{l} {\bf G}({\boldsymbol\theta}) {\bf v} = -\pa_{l} \phi + \frac{1}{2}{\bf g}_{ij,l}{\bf v}^{i}{\bf v}^{j}\\
& = \pa_{i} {\bf g}_{lj}{\bf v}^{i}{\bf v}^{j} + {\bf g}_{lj}\dot {\bf v}^{j}
\end{split}
\]
which means
\[
{\bf g}_{lj}\dot {\bf v}^{j} = -(\pa_{i} {\bf g}_{lj} - \frac{1}{2}\pa_{l} {\bf g}_{ij}){\bf v}^{i}{\bf v}^{j} -\pa_{l} \phi
\]
By multiplying ${\bf G}^{-1}= ({\bf g}^{kl})$ on both sides, we have
\begin{equation}\label{dvij}
\dot {\bf v}^{k} = \delta^{k}_{j}\dot {\bf v}^{j} = -{\bf g}^{kl}(\pa_{i} {\bf g}_{lj} - \frac{1}{2}\pa_{l} {\bf g}_{ij}){\bf v}^{i}{\bf v}^{j} -{\bf g}^{kl}\pa_{l} \phi
\end{equation}
Since $i, j$ are symmetric in the first summand, switching them gives the following equations:
\begin{equation}\label{dvji}
\dot {\bf v}^{k} = -{\bf g}^{kl}(\pa_{j} {\bf g}_{li} - \frac{1}{2}\pa_{l} {\bf g}_{ji}){\bf v}^{i}{\bf v}^{j} -{\bf g}^{kl}\pa_{l} \phi
\end{equation}
which in turn gives the final form of Equation \eqref{rmld} after adding equations \eqref{dvij} and \eqref{dvji} and dividing the results by two:
\begin{equation*}
\dot {\bf v}^k = -\Gamma^k_{ij}({\boldsymbol\theta}){\bf v}^i {\bf v}^j - {\bf g}^{kl}({\boldsymbol\theta})\pa_{l} \phi({\boldsymbol\theta})
\end{equation*}
Here, $\Gamma^k_{ij}({\boldsymbol\theta}) := \frac12 {\bf g}^{kl}(\pa_{i} {\bf g}_{lj} + \pa_{j} {\bf g}_{il} - \pa_{l} {\bf g}_{ij})$ is Christoffel Symbol of second kind.

Note that the new dynamic \eqref{rmld} still preserves the original Hamiltonian $H({\boldsymbol\theta}, {\bf p}={\bf G}({\boldsymbol\theta}){\bf v})$. This is of course intuitive, but it also can be proven as follows:
\[
\begin{split}
\frac{d}{dt} H({\boldsymbol\theta}, {\bf G}({\boldsymbol\theta}){\bf v}) & = \dot {\boldsymbol\theta}^{\textsf T} \frac{\pa }{\pa {\boldsymbol\theta}} H({\boldsymbol\theta}, {\bf G}({\boldsymbol\theta}){\bf v}) + \dot {\bf v}^{\textsf T} \frac{\pa }{\pa {\bf v}} H({\boldsymbol\theta}, {\bf G}({\boldsymbol\theta}){\bf v})\\
& = {\bf v}^{\textsf T}\left[ \nabla_{\boldsymbol\theta} \phi({\boldsymbol\theta}) + \frac12 {\bf v}^{\textsf T}\pa {\bf G}({\boldsymbol\theta}) {\bf v}\right] + \left[-{\bf v}^{\textsf T}\Gamma({\boldsymbol\theta}) {\bf v} - {\bf G}({\boldsymbol\theta})^{-1}\nabla_{\boldsymbol\theta} \phi({\boldsymbol\theta})\right]^{\textsf T} {\bf G}({\boldsymbol\theta}){\bf v}\\
& = {\bf v}^{\textsf T}\nabla_{\boldsymbol\theta} \phi({\boldsymbol\theta}) - \left(\nabla_{\boldsymbol\theta} \phi({\boldsymbol\theta})\right)^{\textsf T}{\bf v} + \frac12{\bf v}^{\textsf T}\left( {\bf v}^{\textsf T}\pa {\bf G}({\boldsymbol\theta}) {\bf v}\right)- ({\bf v}^{\textsf T}\tilde\Gamma({\boldsymbol\theta}) {\bf v})^{\textsf T} {\bf v}\\
& = 0+ 0 =0
\end{split}
\]
where ${\bf v}^{\textsf T}\Gamma({\boldsymbol\theta}) {\bf v}$ is a vector whose $k$th element is  $\Gamma^k_{ij}({\boldsymbol\theta}){\bf v}^i {\bf v}^j$. The second 0 is due to the triple form $({\bf v}^{\textsf T}\tilde\Gamma({\boldsymbol\theta}) {\bf v})^{\textsf T} {\bf v}= \tilde\Gamma_{ijk}{\bf v}^{i}{\bf v}^{j}{\bf v}^{k}=\frac12 \pa_{k} {\bf g}_{ij}{\bf v}^{i}{\bf v}^{j}{\bf v}^{k}$, where $\tilde \Gamma$ is Christoffel Symbol of first kind with elements $\tilde\Gamma_{ijk}({\boldsymbol\theta}) := {\bf g}_{kl}\Gamma_{ij}^{l}({\boldsymbol\theta}) = \frac12 (\pa_{i} {\bf g}_{kj} + \pa_{j} {\bf g}_{ik} - \pa_{k} {\bf g}_{ij})$.

\section{Derivation of semi-explicit Riemannian Manifold Lagrangian Monte Carlo (RMLMC)}\label{semi-RMLMC} 
Consider the following generalized leapfrog integration scheme:
\begin{eqnarray*}
{\bf p}^{(n+1/2)}  & = & {\bf p}^{(n)} - \frac{\eps}{2}\frac{\pa H}{\pa {\boldsymbol\theta}}({\boldsymbol\theta}^{(n)},{\bf p}^{(n+1/2)})\\
{\boldsymbol\theta}^{(n+1)} & = & {\boldsymbol\theta}^{(n)} + \frac{\eps}{2}\left[ \frac{\pa H}{\pa {\bf p}}({\boldsymbol\theta}^{(n)},{\bf p}^{(n+1/2)}) + \frac{\pa H}{\pa {\bf p}}({\boldsymbol\theta}^{(n+1)},{\bf p}^{(n+1/2)}) \right] \\
{\bf p}^{(n+1)} & = & {\bf p}^{(n+1/2)} - \frac{\eps}{2}\frac{\pa H}{\pa {\boldsymbol\theta}}({\boldsymbol\theta}^{(n)},{\bf p}^{(n+1/2)})
\end{eqnarray*}
We composite implicit steps for velocity and explicit step for position within a leapfrog step to integrate dynamic \eqref{rmld} and derive the following semi-explicit integrator:
\begin{eqnarray}\label{GLv}
{\bf v}^{(n+1/2)}  & = & {\bf v}^{(n)} - \frac{\eps}{2}[({\bf v}^{(n+1/2)})^{\textsf T} \Gamma({\boldsymbol\theta}^{(n)}) {\bf v}^{(n+1/2)} + {\bf G}({\boldsymbol\theta}^{(n)})^{-1} \nabla_{\boldsymbol\theta}\phi({\boldsymbol\theta}^{(n)})] \label{GLv:1}\\
{\boldsymbol\theta}^{(n+1)} & = & {\boldsymbol\theta}^{(n)} + \eps {\bf v}^{(n+1/2)}  \label{GLv:2}\\
{\bf v}^{(n+1)} & = & {\bf v}^{(n+1/2)} - \frac{\eps}{2}[({\bf v}^{(n+1/2)})^{\textsf T} \Gamma({\boldsymbol\theta}^{(n+1)}) {\bf v}^{(n+1/2)} + {\bf G}({\boldsymbol\theta}^{(n+1)})^{-1} \nabla_{\boldsymbol\theta}\phi({\boldsymbol\theta}^{(n+1)})] \label{GLv:3}
\end{eqnarray}
The time-reversibility of this integrator can be shown by switching $({\boldsymbol\theta},{\bf v})^{(n+1)}$ and $({\boldsymbol\theta},{\bf v})^{(n)}$ and negating velocity. The resulting integrator, however, is no longer volume-preserving (see subsection \ref{detadj1}). Nevertheless, based on proposition \ref{dbwda}, we can still have detailed balance after determinant adjustment. \citep[See ][for more details.]{liu01}
\begin{prop}[Detailed Balance Condition with determinant adjustment]\label{dbwda}
Denote ${\bf z}=({\boldsymbol\theta},{\bf p})$, ${\bf z}'=\hat T_{L}({\bf z})$ for some time reversible integrator $\hat T_{L}$ to the Lagrangian dynamic. If the acceptance probability is adjusted in the following way:
\begin{equation*}
\tilde\alpha({\bf z},{\bf z'}) =\min\left\{1,\frac{\exp(-H({\bf z'}))}{\exp(-H({\bf z}))}|\det \hat T_{L}|\right\}
\end{equation*}
then the detailed balance condition still holds
\begin{equation*}
\tilde\alpha({\bf z},{\bf z'}) \mathbb P(d{\bf z}) = \tilde\alpha({\bf z'},{\bf z}) \mathbb P(d{\bf z'})
\end{equation*}
\end{prop}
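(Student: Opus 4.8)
The plan is to verify the detailed balance identity directly by substituting the adjusted acceptance probability and using the time-reversibility of the integrator $\hat T_L$ together with a change-of-variables argument. First I would make the reversibility precise: since $\hat T_L$ is time-reversible, there is an involution $R$ (the momentum/velocity flip $R({\boldsymbol\theta},{\bf p}) = ({\boldsymbol\theta},-{\bf p})$) such that $R \circ \hat T_L \circ R = \hat T_L^{-1}$, and the target density $\exp(-H({\bf z}))$ is invariant under $R$ because $H$ is quadratic in the momentum/velocity. Writing ${\bf z}' = \hat T_L({\bf z})$, the key is that integrating against $\mathbb{P}(d{\bf z}) \propto \exp(-H({\bf z}))\,d{\bf z}$ and pushing forward through $\hat T_L$ introduces exactly the Jacobian factor $|\det \hat T_L|$ that the adjusted acceptance ratio is designed to cancel.

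The core of the argument is the following chain. To establish detailed balance it suffices to show, for arbitrary test functions, that
\begin{equation*}
\int f({\bf z})\,\tilde\alpha({\bf z},{\bf z}')\,\mathbb{P}(d{\bf z}) = \int f({\bf z})\,\tilde\alpha({\bf z}',{\bf z})\,\mathbb{P}(d{\bf z}')
\end{equation*}
interpreted pointwise in the one-to-one deterministic map sense. Concretely I would start from the left-hand measure $\exp(-H({\bf z}))\,\tilde\alpha({\bf z},{\bf z}')\,d{\bf z}$ and change variables to ${\bf z}'$ using ${\bf z} = \hat T_L^{-1}({\bf z}')$, which contributes $|\det \hat T_L^{-1}| = |\det \hat T_L|^{-1}$ to the volume element. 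The crucial algebraic step is then to expand the definition
\begin{equation*}
\tilde\alpha({\bf z},{\bf z}') = \min\left\{1,\ \frac{\exp(-H({\bf z}'))}{\exp(-H({\bf z}))}\,|\det \hat T_L|\right\}
\end{equation*}
and observe that multiplying by the target ratio and the Jacobian converts it into $\tilde\alpha({\bf z}',{\bf z})$ via the elementary identity $a\,\min\{1, b/a\} = b\,\min\{1, a/b\} = \min\{a, b\}$, with $a = \exp(-H({\bf z}))$ and $b = \exp(-H({\bf z}'))\,|\det \hat T_L|$. The symmetric quantity $\min\{a,b\}$ is manifestly invariant under swapping the roles of ${\bf z}$ and ${\bf z}'$, and all residual Jacobian factors telescope against those produced by the change of variables and by the reverse transformation's own determinant.

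The step I expect to be the main obstacle is bookkeeping the determinant consistently: one must check that the Jacobian of the forward map evaluated at ${\bf z}$ and the Jacobian of the reverse map evaluated at ${\bf z}'$ are reciprocal in exactly the way needed, which relies on $R$ being volume-preserving (indeed an involution with $|\det R| = 1$) and on the reversibility relation $\hat T_L^{-1} = R \circ \hat T_L \circ R$. Care is needed because $\hat T_L$ is composed of $L$ leapfrog steps, so $\det \hat T_L = \prod_n \det J_n$ is the product of the per-step Jacobians tracked as $\Delta\log\det$ in Algorithm \ref{Alg:e-RMLMC}; I would confirm that the reverse trajectory reuses the same intermediate states (with flipped velocity) so that its determinant is genuinely the inverse. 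Once the determinant accounting is pinned down, the remaining verification is the routine $\min\{a,b\}$ symmetry, and the detailed balance condition follows immediately.
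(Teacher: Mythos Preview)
Your proposal is correct and follows essentially the same route as the paper's proof: expand $\tilde\alpha({\bf z},{\bf z}')\exp(-H({\bf z}))\,d{\bf z}$, perform the change of variables ${\bf z}=\hat T_L^{-1}({\bf z}')$ to pick up $|\det\hat T_L|^{-1}$, and then use the identity $a\min\{1,b/a\}=\min\{a,b\}=b\min\{1,a/b\}$ to rewrite the result as $\tilde\alpha({\bf z}',{\bf z})\exp(-H({\bf z}'))\,d{\bf z}'$. The paper's argument is just these three lines, without the test-function framing or the explicit discussion of the involution $R$; your additional bookkeeping about reversibility and per-step Jacobians is sound but more than the paper actually invokes at this point.
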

\begin{proof}
\[
\begin{split}
\tilde\alpha({\bf z},{\bf z'}) \mathbb P(d{\bf z}) & =\min\left\{1,\frac{\exp(-H({\bf z'}))}{\exp(-H({\bf z}))}\left|\frac{d{\bf z'}}{d{\bf z}}\right|\right\} \exp(-H({\bf z}))d{\bf z}\\
& \overset{{\bf z}= \hat T_{L}^{-1}({\bf z'})}= \min\left\{\exp(-H({\bf z})),\exp(-H({\bf z'}))\left|\frac{d{\bf z'}}{d{\bf z}}\right|\right\} \left|\frac{d{\bf z}}{d{\bf z'}}\right|d{\bf z'}\\
& = \min\left\{1,\frac{\exp(-H({\bf z}))}{\exp(-H({\bf z'}))}\left|\frac{d{\bf z}}{d{\bf z'}}\right|\right\} \exp(-H({\bf z'})) {\bf z'} = \tilde\alpha({\bf z'},{\bf z}) \mathbb P(d{\bf z'})
\end{split}
\]
\qed
\end{proof}
Therefore, the acceptance probability could be calculated based on $H({\boldsymbol\theta},{\bf G}({\boldsymbol\theta}){\bf v})$. However, it also could be also calculated as follows based on the energy function ${\bf E}({\boldsymbol\theta},{\bf v})$ defined in section \ref{useV},
\begin{equation}\label{energyv}
{\bf E}({\boldsymbol\theta}, {\bf v}) = U({\boldsymbol\theta}) + K({\boldsymbol\theta}, {\bf v}) = -\log p({\boldsymbol\theta}) -\frac12 \log\det {\bf G}({\boldsymbol\theta}) + \frac12 {\bf v}^{\textsf T}{\bf G}({\boldsymbol\theta}){\bf v}
\end{equation}
To show their equivalence, we note that $\displaystyle \left|\frac{\pa({\boldsymbol\theta}',{\bf p}')}{\pa({\boldsymbol\theta},{\bf p})}\right| = \frac{\det({\bf G}({\boldsymbol\theta}'))}{\det({\bf G}({\boldsymbol\theta}))} \left|\frac{\pa({\boldsymbol\theta}',{\bf v}')}{\pa({\boldsymbol\theta},{\bf v})}\right|$ and proved our claim as follows:
\[
\begin{split}
\tilde\alpha & = \min\left\{1,\frac{\exp(-H({\boldsymbol\theta}',{\bf p}'))}{\exp(-H({\boldsymbol\theta},{\bf p}))}  \left|\frac{\pa({\boldsymbol\theta}',{\bf p}')}{\pa({\boldsymbol\theta},{\bf p})}\right| \right\} = \min\left\{1,\frac{\exp(-H({\boldsymbol\theta}',{\bf G}({\boldsymbol\theta}'){\bf v}'))}{\exp(-H({\boldsymbol\theta},{\bf G}({\boldsymbol\theta}){\bf v}))}  \left|\frac{\pa({\boldsymbol\theta}',{\bf p}')}{\pa({\boldsymbol\theta},{\bf p})}\right| \right\} \\
& = \min\left\{1,\frac{\exp\{-(\log p({\boldsymbol\theta}') +\frac12 \log\det {\bf G}({\boldsymbol\theta}') + 
\frac12 {\bf v'}^{\textsf T}{\bf G}({\boldsymbol\theta}'){\bf v'})\}}{\exp\{-(\log p({\boldsymbol\theta}) +\frac12 \log\det {\bf G}({\boldsymbol\theta}) + 
\frac12 {\bf v}^{\textsf T}{\bf G}({\boldsymbol\theta}){\bf v})\}} \frac{\det({\bf G}({\boldsymbol\theta}'))}{\det({\bf G}({\boldsymbol\theta}))} \left|\frac{\pa({\boldsymbol\theta}',{\bf v}')}{\pa({\boldsymbol\theta},{\bf v})}\right| \right\}\\
& = \min\left\{1,\frac{\exp\{-(\log p({\boldsymbol\theta}') -\frac12 \log\det {\bf G}({\boldsymbol\theta}') + 
\frac12 {\bf v'}^{\textsf T}{\bf G}({\boldsymbol\theta}'){\bf v'})\}}{\exp\{-(\log p({\boldsymbol\theta}) -\frac12 \log\det {\bf G}({\boldsymbol\theta}) + 
\frac12 {\bf v}^{\textsf T}{\bf G}({\boldsymbol\theta}){\bf v})\}} \left|\frac{\pa({\boldsymbol\theta}',{\bf v}')}{\pa({\boldsymbol\theta},{\bf v})}\right| \right\}\\
& = \min\left\{1,\frac{\exp(-{\bf E}({\boldsymbol\theta}',{\bf v}'))}{\exp(-\bf{E}({\boldsymbol\theta},{\bf v}))}  \left|\frac{\pa({\boldsymbol\theta}',{\bf v}')}{\pa({\boldsymbol\theta},{\bf v})}\right| \right\}
\end{split}
\]

\subsection{Volume Correction}\label{detadj1}
To adjust volume, we must derive the Jacobian determinant, $\det {\bf J}:=\left|\frac{\pa({\boldsymbol\theta}^{(L+1)},{\bf v}^{(L+1)})}{\pa({\boldsymbol\theta}^{(1)},{\bf v}^{(1)})}\right|$ using wedge products.
\begin{dfn}[Differential Forms, Wedge Product]
Differential one-form $\alpha: TM^{D}\to \mathbb R$ on a differential manifold $M^{D}$ is a smooth mapping from tangent space $TM^{D}$ to $\mathbb R$, which can be expressed as linear combination of differentials of local coordinates: $\alpha = f_{i}dx^{i}=: f\cdot dx$.

For example, if $f: \mathbb R^{D}\to \mathbb R$ is a smooth function, then its directional derivative along a vector $v\in \mathbb R^{D}$, denoted by $df(v)$ is given by
\[
df(v) = \frac{\pa f}{\pa z_{i}} v^{i}
\]
then $df(\cdot)$ is a linear functional of $v$, called the differential of $f$ at $z$ and is an example of a differential one-form.
In particular, $dz^{i}(v)=v^{i}$, thus
\[
df(v) = \frac{\pa f}{\pa z_{i}} dz^{i}(v), \quad then \; df = \frac{\pa f}{\pa z_{i}} dz^{i}
\]

Wedge Product of two one-form $\alpha, \beta$ is a 2-form $\alpha\wedge\beta$ anti-symmetric bilinear function on tangent space which has the following properties ($\alpha,\beta,\gamma$ one-forms, $A$ be a square matrix of same dimension $D$):
 \begin{itemize}
 \item $\alpha\wedge\alpha=0$
 \item $\alpha\wedge(\beta+\gamma) = \alpha\wedge\beta+\alpha\wedge\gamma$ (thus $\alpha\wedge\beta=-\beta\wedge\alpha$)
 \item $\alpha\wedge A\beta=A^{\textsf T}\alpha\wedge\beta$
 \end{itemize}
\end{dfn}

The following proposition enables us to calculate the Jacobian determinant denoted as $\det {\bf J}$.
\begin{prop}\label{wedge}
Let $T_{L}: ({\boldsymbol\theta}^{(1)},{\bf v}^{(1)})\to ({\boldsymbol\theta}^{(L+1)},{\bf v}^{(L+1)})$ be evolution of a smooth flow, then
\[
d{\boldsymbol\theta}^{(L+1)}\wedge d{\bf v}^{(L+1)} = \frac{\pa({\boldsymbol\theta}^{(L+1)},{\bf v}^{(L+1)})}{\pa({\boldsymbol\theta}^{(1)},{\bf v}^{(1)})} d{\boldsymbol\theta}^{(1)}\wedge d{\bf v}^{(1)}
\]
\end{prop}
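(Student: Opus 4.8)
The plan is to prove the claim as the standard transformation law for a top-degree differential form under a change of variables, with the Jacobian determinant appearing through the antisymmetry of the wedge product. First I would make explicit that the shorthand $d{\boldsymbol\theta}^{(L+1)}\wedge d{\bf v}^{(L+1)}$ denotes the full $2D$-fold product
\[
d\theta^{(L+1)}_1\wedge\cdots\wedge d\theta^{(L+1)}_D\wedge dv^{(L+1)}_1\wedge\cdots\wedge dv^{(L+1)}_D
\]
over all $2D$ coordinates of the augmented state. Collecting these coordinates into a single vector ${\bf z}=({\boldsymbol\theta},{\bf v})\in\mathbb R^{2D}$ and writing ${\bf z}^{(L+1)}=T_L({\bf z}^{(1)})$, the proposition reduces to the assertion that $\bigwedge_{a=1}^{2D} dz^{(L+1)}_a = \det J \cdot \bigwedge_{a=1}^{2D} dz^{(1)}_a$, where $J=\pa{\bf z}^{(L+1)}/\pa{\bf z}^{(1)}$ is the Jacobian matrix. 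The smooth-flow hypothesis is used only to guarantee that $T_L$ is differentiable so that $J$ exists; the rest of the argument is purely algebraic in the differentials.

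Next I would differentiate each new coordinate. By the chain rule each $dz^{(L+1)}_a$ is the one-form
\[
dz^{(L+1)}_a = \frac{\pa z^{(L+1)}_a}{\pa z^{(1)}_b}\, dz^{(1)}_b = J_{ab}\, dz^{(1)}_b ,
\]
a linear combination of the old differentials. Substituting all $2D$ of these into the top-degree product and expanding via the additivity property $\alpha\wedge(\beta+\gamma)=\alpha\wedge\beta+\alpha\wedge\gamma$ produces a sum over index tuples $(b_1,\dots,b_{2D})$ of terms $J_{1b_1}\cdots J_{2D\,b_{2D}}\; dz^{(1)}_{b_1}\wedge\cdots\wedge dz^{(1)}_{b_{2D}}$.

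Then I would invoke antisymmetry to collapse this sum. The property $\alpha\wedge\alpha=0$ annihilates every term with a repeated lower index, so only tuples that are permutations $\sigma$ of $(1,\dots,2D)$ survive; and the induced rule $\alpha\wedge\beta=-\beta\wedge\alpha$ lets me reorder each surviving product back to the canonical ordering $dz^{(1)}_1\wedge\cdots\wedge dz^{(1)}_{2D}$ at the cost of a factor $\operatorname{sgn}(\sigma)$. Gathering terms yields the coefficient $\sum_\sigma \operatorname{sgn}(\sigma)\prod_a J_{a\,\sigma(a)}$, which is exactly the Leibniz expansion of $\det J$, establishing the identity.

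The computation is the classical argument that the pullback of a volume form rescales by the Jacobian determinant, so I expect no conceptual difficulty; the only genuine bookkeeping obstacle is verifying that reordering a permuted wedge product produces precisely the sign $\operatorname{sgn}(\sigma)$ and no spurious factors, which follows directly from the stated antisymmetry of $\wedge$. As an alternative I could exploit the matrix identity $\alpha\wedge A\beta = A^{\textsf T}\alpha\wedge\beta$ listed in the definition to push the Jacobian through in block form, but the index-level expansion above is the most transparent route to the determinant.
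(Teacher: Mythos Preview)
Your argument is correct and is the standard derivation of the transformation law for a top-degree form via the Leibniz expansion of the determinant. The paper itself does not supply a proof of this proposition: it is stated immediately after the definition of differential forms and wedge products as a tool (``The following proposition enables us to calculate the Jacobian determinant''), with the justification left implicit in the properties of $\wedge$ just listed. Your write-up therefore fills in exactly what the paper omits, and nothing in it conflicts with the paper's treatment.
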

Note that the Jacobian determinant $\det {\bf J}$ can also be regarded as Radon-Nikodym derivative of two probability measures: $\det {\bf J} = \displaystyle \frac{\mathbb P(d{\boldsymbol\theta}^{(L+1)},d{\bf v}^{(L+1)})}{\mathbb P(d{\boldsymbol\theta}^{(1)}, d{\bf v}^{(1)})}$, where $\mathbb P(d{\boldsymbol\theta},d{\bf v})=p({\boldsymbol\theta},{\bf v})d{\boldsymbol\theta}d{\bf v}$. We have
\begin{eqnarray*}
d{\bf v}^{(n+1/2)}  & = & d{\bf v}^{(n)} - \eps ({\bf v}^{(n+1/2)})^{\textsf T} \Gamma({\boldsymbol\theta}^{(n)}) d{\bf v}^{(n+1/2)}+ (**)d{\boldsymbol\theta}^{(n)}\\
d{\boldsymbol\theta}^{(n+1)} & = & d{\boldsymbol\theta}^{(n)} + \eps d{\bf v}^{(n+1/2)}\\
d{\bf v}^{(n+1)} & = & d{\bf v}^{(n+1/2)} - \eps ({\bf v}^{(n+1/2)})^{\textsf T} \Gamma({\boldsymbol\theta}^{(n+1)}) d{\bf v}^{(n+1/2)} +(**)d{\boldsymbol\theta}^{(n+1)}
\end{eqnarray*}
where ${\bf v}^{\textsf T} \Gamma({\boldsymbol\theta})$ is a matrix whose $(k,j)$th element is ${\bf v}^{i}\Gamma_{ij}^{k}({\boldsymbol\theta})$. Therefore,
\[
\begin{split}
d{\boldsymbol\theta}^{(n+1)} \wedge d{\bf v}^{(n+1)} & = [{\bf I}-\eps ({\bf v}^{(n+1/2)})^{\textsf T} \Gamma({\boldsymbol\theta}^{(n+1)})]^{\textsf T} d{\boldsymbol\theta}^{(n+1)} \wedge d{\bf v}^{(n+1/2)}\\
& = [{\bf I}-\eps ({\bf v}^{(n+1/2)})^{\textsf T} \Gamma({\boldsymbol\theta}^{(n+1)})]^{\textsf T} d{\boldsymbol\theta}^{(n)} \wedge d{\bf v}^{(n+1/2)}\\
& = [{\bf I}-\eps ({\bf v}^{(n+1/2)})^{\textsf T} \Gamma({\boldsymbol\theta}^{(n+1)})]^{\textsf T} [{\bf I}+\eps ({\bf v}^{(n+1/2)})^{\textsf T} \Gamma({\boldsymbol\theta}^{(n)})]^{-T} d{\boldsymbol\theta}^{(n)} \wedge d{\bf v}^{(n)}
\end{split}
\]
For volume adjustment, we must use the following Jacobian determinant accumulated along leap frog steps:
\begin{equation}\label{detglv}
\det {\bf J}_{LMC} := \left|\frac{\pa({\boldsymbol\theta}^{(L+1)},{\bf v}^{(L+1)})}{\pa({\boldsymbol\theta}^{(1)},{\bf v}^{(1)})}\right| = \prod_{n=1}^{L} \frac{\det(I-\eps ({\bf v}^{(n+1/2)})^{\textsf T} \Gamma({\boldsymbol\theta}^{(n+1)}))}{\det(I+\eps ({\bf v}^{(n+1/2)})^{\textsf T} \Gamma({\boldsymbol\theta}^{(n)}))}\\[12pt]
\end{equation}
As a result, the acceptance probability becomes
\begin{equation*}
\alpha_{LMC} = \min\{1,\exp(-{\bf E}({\boldsymbol\theta}^{(L+1)},{\bf v}^{(L+1)})+{\bf E}({\boldsymbol\theta}^{(1)},{\bf v}^{(1)}))|\det {\bf J}_{LMC}|\}
\end{equation*}
Using this acceptance probability, we are able to derive a semi-explicit integrator for RMLMC as shown in Algorithm \ref{Alg:RMLMC}. In this approach, the updates for ${\boldsymbol\theta}$ are explicit, while updating ${\bf v}$ remains implicit.

\section{Connection to Lagrangian Dynamics}\label{connection2L}
We now show that the above dynamic \eqref{rmld} is indeed \emph{Lagrangian} dynamic. We define {Lagrangian} as follows:
\begin{equation*}
{\bf L} = \frac12 {\bf v}^{\textsf T}{\bf G}({\boldsymbol\theta}){\bf v} - \phi({\boldsymbol\theta}) \;
\end{equation*}
Using variation calculus to minimize the Lagrangian, we obtain a Euler-Lagrange equation of the second kind,
\begin{equation*}
\frac{\pa {\bf L}}{\pa {\boldsymbol\theta}}=\frac{d}{dt} \frac{\pa {\bf L}}{\pa \dot{\boldsymbol\theta}} \;
\end{equation*}
which is
\begin{equation}\label{variation}
\ddot {\boldsymbol\theta}  =  - \dot{\boldsymbol\theta}^{\textsf T} \Gamma({\boldsymbol\theta}) {\bf v} - {\bf G}({\boldsymbol\theta})^{-1} \nabla_{\boldsymbol\theta} \phi({\boldsymbol\theta})
\end{equation}
This is equivalent to the new dynamic \eqref{rmld} by taking the time derivative on the first equation and equating it with the time derivative on the second.

\section{Derivation of explicit Riemannian Manifold Lagrangian Monte Carlo (e-RMLMC)}\label{xplct-RMLMC}
We now propose an additional modification of Algorithm \ref{Alg:RMLMC} to resolve the remaining implicit equation \eqref{GLv:1}, while keeping time-reversibility to ensure the ergodicity of the induced Markov chain. We do this by modifying the symmetric quadratic form in equations \eqref{GLv:1}, 
\begin{eqnarray}\label{TSv}
{\bf v}^{(n+1/2)} & = & {\bf v}^{(n)} - \frac{\eps}{2} [({\bf v}^{(n)})^{\textsf T}\Gamma({\boldsymbol\theta}^{(n)}) {\bf v}^{(n+1/2)} + {\bf G}({\boldsymbol\theta}^{(n)})^{-1}\nabla_{\boldsymbol\theta}\phi({\boldsymbol\theta}^{(n)})] \label{TSv:1}\\
{\boldsymbol\theta}^{(n+1)} & = & {\boldsymbol\theta}^{(n)} + \eps {\bf v}^{(n+1/2)} \label{TSv:2}\\
{\bf v}^{(n+1)} & = & {\bf v}^{(n+1/2)} - \frac{\eps}{2} [({\bf v}^{(n+1/2)})^{\textsf T}\Gamma({\boldsymbol\theta}^{(n+1)}) {\bf v}^{(n+1)} + {\bf G}({\boldsymbol\theta}^{(n+1)})^{-1}\nabla_{\boldsymbol\theta}\phi({\boldsymbol\theta}^{(n+1)})] \label{TSv:3}
\end{eqnarray}
The resulting integrator is completely explicit since both updates of velocity \eqref{TSv:1} and \eqref{TSv:3} can be solved as follows:
\begin{eqnarray}\label{TSvsolved}
{\bf v}^{(n+1/2)} & = & [{\bf I} + \frac{\eps}{2} ({\bf v}^{(n)})^{\textsf T}\Gamma({\boldsymbol\theta}^{(n)})]^{-1}
[{\bf v}^{(n)} - \frac{\eps}{2} {\bf G}({\boldsymbol\theta}^{(n)})^{-1}
\nabla_{\boldsymbol\theta}\phi({\boldsymbol\theta}^{(n)})]  \label{TSvsolved:1}\\
{\bf v}^{(n+1)} & = & [{\bf I} + \frac{\eps}{2} ({\bf v}^{(n+1/2)})^{\textsf T}\Gamma({\boldsymbol\theta}^{(n+1)})]^{-1}[{\bf v}^{(n+1/2)} - \frac{\eps}{2} {\bf G}({\boldsymbol\theta}^{(n+1)})^{-1}\nabla_{\boldsymbol\theta}\phi({\boldsymbol\theta}^{(n+1)})] \label{TSvsolved:2}
\end{eqnarray}
\subsection{Convergence of Numerical Solution}\label{convg}
We now show that the discretization error $e_n=\Vert {\bf z}(t_n) - {\bf z}^{(n)}\Vert =\Vert ({\boldsymbol\theta}(t_n),{\bf v}(t_n)) - ({\boldsymbol\theta}^{(n)},{\bf v}^{(n)})\Vert$ (i.e. the difference between the true solution and the numerical solution) accumulated over final time interval $[0,T]$, is bounded and goes to zeros as the stepsize $\eps$ goes to zero. (See \cite{leimkuhler04} for a similar proof for the generalized leapfrog method.) Here, we assume that $F({\boldsymbol\theta},{\bf v}):= {\bf v}^{\textsf T}\Gamma({\boldsymbol\theta}) {\bf v} + {\bf G}({\boldsymbol\theta})^{-1} \nabla_{\boldsymbol\theta}\phi({\boldsymbol\theta})$ is smooth; hence, $F$ and its derivatives are uniformly bounded as $({\boldsymbol\theta},{\bf v})$ evolves within finite time duration $T$.
We expand the true solution ${\bf z}(t_{n+1})$ at $t_n$:
\[
\begin{split}
{\bf z}(t_{n+1}) & = {\bf z}(t_n)+\dot {\bf z}(t_n)\eps +\frac{1}{2}\ddot {\bf z}(t_n)\eps^2 +o(\eps)\\
& = \begin{bmatrix}{\boldsymbol\theta}(t_n)\\{\bf v}(t_n)\end{bmatrix} + \begin{bmatrix}{\bf v}(t_n)\\-F({\boldsymbol\theta}(t_n),{\bf v}(t_n))\end{bmatrix}\eps + \frac12 \begin{bmatrix}-F({\boldsymbol\theta}(t_n),{\bf v}(t_n))\\-\frac{\pa F}{\pa {\boldsymbol\theta}}{\bf v}(t_n)+\frac{\pa F}{\pa {\bf v}}F({\boldsymbol\theta}(t_n),{\bf v}(t_n))\end{bmatrix}\eps^2 +o(\eps)\\
& = \begin{bmatrix}{\boldsymbol\theta}(t_n)\\{\bf v}(t_n)\end{bmatrix} + \begin{bmatrix}{\bf v}(t_n)\\-F({\boldsymbol\theta}(t_n),{\bf v}(t_n))\end{bmatrix}\eps + O(\eps^2)
\end{split}
\]
Next, we simplify the expression of the numerical solutions ${\bf z}^{(n+1)} = \begin{bmatrix}{\boldsymbol\theta}^{(n+1)}\\{\bf v}^{(n+1)}\end{bmatrix}$ for the fully explicit integrator and compare it to the above true solutions. To this end, we rewrite equation \eqref{TSvsolved:1} as follows:
\[
\begin{split}
{\bf v}^{(n+1/2)} & = [{\bf I} + \frac{\eps}{2} ({\bf v}^{(n)})^{\textsf T}\Gamma({\boldsymbol\theta}^{(n)})]^{-1}[{\bf v}^{(n)} - \frac{\eps}{2} {\bf G}({\boldsymbol\theta}^{(n)})^{-1}\nabla_{\boldsymbol\theta}\phi({\boldsymbol\theta}^{(n)})]\\
& = {\bf v}^{(n)} - [{\bf I} + \frac{\eps}{2} ({\bf v}^{(n)})^{\textsf T}\Gamma({\boldsymbol\theta}^{(n)})]^{-1}[({\bf v}^{(n)})^{\textsf T}\Gamma({\boldsymbol\theta}^{(n)}) {\bf v}^{(n)} + \frac{\eps}{2} {\bf G}({\boldsymbol\theta}^{(n)})^{-1}\nabla_{\boldsymbol\theta}\phi({\boldsymbol\theta}^{(n)})]\\
& = {\bf v}^{(n)} - [{\bf I} +  \frac{\eps}{2} ({\bf v}^{(n)})^{\textsf T}\Gamma({\boldsymbol\theta}^{(n)})]^{-1} \frac{\eps}{2} F({\boldsymbol\theta}^{(n)},{\bf v}^{(n)})\\
& = {\bf v}^{(n)} - \frac{\eps}{2} F({\boldsymbol\theta}^{(n)},{\bf v}^{(n)}) + \frac{\eps^2}{4}[{\bf I} + \frac{\eps}{2} ({\bf v}^{(n)})^{\textsf T}\Gamma({\boldsymbol\theta}^{(n)})]^{-1} [({\bf v}^{(n)})^{\textsf T}\Gamma({\boldsymbol\theta}^{(n)})] F({\boldsymbol\theta}^{(n)},{\bf v}^{(n)})\\
& = {\bf v}^{(n)} - \frac{\eps}{2} F({\boldsymbol\theta}^{(n)},{\bf v}^{(n)}) + O(\eps^2)
\end{split}
\]
Similarly, from equation \eqref{TSvsolved:2} we have
\[
{\bf v}^{(n+1)} = {\bf v}^{(n+1/2)} - \frac{\eps}{2} F({\boldsymbol\theta}^{(n+1)},{\bf v}^{(n+1/2)}) + O(\eps^2)
\]
Substituting ${\bf v}^{(n+1/2)}$ in the above equation, we obtain ${\bf v}^{(n+1)}$ as follows:
\[
\begin{split}
{\bf v}^{(n+1)} & = {\bf v}^{(n)} - \frac{\eps}{2} F({\boldsymbol\theta}^{(n)},{\bf v}^{(n)}) - \frac{\eps}{2} F({\boldsymbol\theta}^{(n+1)},{\bf v}^{(n)}) + O(\eps^2)\\
& = {\bf v}^{(n)} - F({\boldsymbol\theta}^{(n)},{\bf v}^{(n)})\eps + \frac{\eps}{2}[ F({\boldsymbol\theta}^{(n)},{\bf v}^{(n)})- F({\boldsymbol\theta}^{(n)}+O(\eps),{\bf v}^{(n)} ] + O(\eps^2)\\
& = {\bf v}^{(n)} - F({\boldsymbol\theta}^{(n)},{\bf v}^{(n)})\eps + O(\eps^2)
\end{split}
\]
From \eqref{TSvsolved:1}, \eqref{TSv:2}, and \eqref{TSvsolved:2}, we have the following numerical solution:
\[
{\bf z}^{(n+1)} = \begin{bmatrix}{\boldsymbol\theta}^{(n+1)}\\{\bf v}^{(n+1)}\end{bmatrix} = \begin{bmatrix}{\boldsymbol\theta}^{(n)}\\{\bf v}^{(n)}\end{bmatrix} +\begin{bmatrix}{\bf v}^{(n)}\\- F({\boldsymbol\theta}^{(n)},{\bf v}^{(n)})\end{bmatrix}\eps + O(\eps^2)
\]
Therefore, the local error is
\[
\begin{split}
e_{n+1} & =\Vert {\bf z}(t_{n+1}) - {\bf z}^{(n+1)}\Vert = \left\Vert \begin{bmatrix}{\boldsymbol\theta}(t_n)-{\boldsymbol\theta}^{(n)}\\{\bf v}(t_n)-{\bf v}^{(n)}\end{bmatrix} + \begin{bmatrix}{\bf v}(t_n)-{\bf v}^{(n)}\\-[F({\boldsymbol\theta}(t_n),{\bf v}(t_n))- F({\boldsymbol\theta}^{(n)},{\bf v}^{(n)})]\end{bmatrix}\eps + O(\eps^2) \right\Vert\\
& \leq (1+M\eps) e_{n} + O(\eps^2)
\end{split}
\]
where $M = c\sup_{t\in [0,T]}\Vert \nabla F({\boldsymbol\theta}(t),{\bf v}(t))\Vert$ for some constant $c>0$. Accumulating the local errors by iterating the above inequality for $L=T/\eps$ steps provides the following global error:
\[
\begin{split}
e_{n+1} & \leq (1+M\eps) e_{n} + O(\eps^2) \leq (1+M\eps)^2 e_{n-1} + 2 O(\eps^2)\leq \cdots \leq (1+M\eps)^n e_1 + n O(\eps^2)\\
& \leq (1+M\eps)^L \eps + L O(\eps^2) \leq (e^{MT}+T)\eps \to 0,\quad as\; \eps \to 0
\end{split}
\]

\subsection{Volume Correction}\label{tsvdetadj}
As before, using wedge product calculation on the system \eqref{TSvsolved:1}, \eqref{TSv:2}, and \eqref{TSvsolved:2}, the Jacobian matrix is
\[
\begin{split}
\frac{\pa ({\boldsymbol\theta}^{(n+1)},{\bf v}^{(n+1)})}{\pa ({\boldsymbol\theta}^{(n)},{\bf v}^{(n)})} = & [{\bf I}+\frac{\eps}{2} ({\bf v}^{(n+1/2)})^{\textsf T}\Gamma({\boldsymbol\theta}^{(n+1)})]^{-T} [{\bf I}-\frac{\eps}{2} ({\bf v}^{(n+1)})^{\textsf T}\Gamma({\boldsymbol\theta}^{(n+1)})]^{\textsf T} \cdot\\
& [{\bf I}+\frac{\eps}{2} ({\bf v}^{(n)})^{\textsf T}\Gamma({\boldsymbol\theta}^{(n)})]^{-T} [{\bf I}-\frac{\eps}{2} ({\bf v}^{(n+1/2)})^{\textsf T}\Gamma({\boldsymbol\theta}^{(n)})]^{\textsf T}
\end{split}
\]
As these new equations show, our derived integrator is not symplectic so the acceptance probability needs to be adjusted by the following Jacobian determinant, $\det {\bf J}$, in order to preserve the detailed balance condition:
\begin{eqnarray}\begin{array}{ll}
& \det {\bf J}_{e-LMC} := \displaystyle \left|\frac{\pa({\boldsymbol\theta}^{(L+1)},{\bf v}^{(L+1)})}{\pa({\boldsymbol\theta}^{(1)},{\bf v}^{(1)})}\right|\\
& = \displaystyle \prod_{n=1}^{L} \frac{\det(I-\eps/2 ({\bf v}^{(n+1)})^{\textsf T}\Gamma({\boldsymbol\theta}^{(n+1)})) \det(I-\eps/2 ({\bf v}^{(n+1/2)})^{\textsf T}\Gamma({\boldsymbol\theta}^{(n)}))}{\det(I+\eps/2 ({\bf v}^{(n+1/2)})^{\textsf T}\Gamma({\boldsymbol\theta}^{(n+1)})) \det(I+\eps/2 ({\bf v}^{(n)})^{\textsf T}\Gamma({\boldsymbol\theta}^{(n)}))}\\[12pt]
& = \displaystyle \prod_{n=1}^{L} \frac{\det({\bf G}({\boldsymbol\theta}^{(n+1)})-\eps/2 ({\bf v}^{(n+1)})^{\textsf T}\tilde\Gamma({\boldsymbol\theta}^{(n+1)})) \det({\bf G}({\boldsymbol\theta}^{(n)})-\eps/2 ({\bf v}^{(n+1/2)})^{\textsf T}\tilde\Gamma({\boldsymbol\theta}^{(n)}))}{\det({\bf G}({\boldsymbol\theta}^{(n+1)})+\eps/2 ({\bf v}^{(n+1/2)})^{\textsf T}\tilde\Gamma({\boldsymbol\theta}^{(n+1)})) \det({\bf G}({\boldsymbol\theta}^{(n)})+\eps/2 ({\bf v}^{(n)})^{\textsf T}\tilde\Gamma({\boldsymbol\theta}^{(n)}))}
\end{array}\label{dettsv}\end{eqnarray}
As a result, the acceptance probability is
\begin{equation*}
\alpha_{e-LMC} = \min\{1,\exp(-{\bf E}({\boldsymbol\theta}^{(L+1)},{\bf v}^{(L+1)})+{\bf E}({\boldsymbol\theta}^{(1)},{\bf v}^{(1)}))|\det {\bf J}_{e-LMC}|\}
\end{equation*}

We can now derive a completely explicit integrator for RMLMC defined in terms of $({\boldsymbol\theta},{\bf v})$. We refer to this integrator as e-RMLMC for which the corresponding steps are presented in Algorithm \ref{Alg:e-RMLMC}. In both algorithms \ref{Alg:RMLMC} and \ref{Alg:e-RMLMC}, the position update is relatively simple while the computational time is dominated by choosing the ``right'' direction (velocity) using the geometry of parameter space. Finally, it is easy to show that for ${\bf G}({\boldsymbol\theta})= {\bf I}$, our method degenerates to standard HMC.

\end{document}